\pdfoutput=1

\documentclass[11pt]{article}

\usepackage[margin=1in]{geometry}

\usepackage{amsmath,amssymb,amsthm}
\usepackage{graphicx}
\usepackage{xcolor}
\usepackage{hyperref}
\usepackage{cleveref}
\usepackage{enumerate}
\usepackage{comment}

\usepackage{bbm}
\usepackage{nicefrac}
\usepackage{xfrac}
\usepackage{xspace}

\usepackage[ruled]{algorithm}
\usepackage{algpseudocode}

\usepackage[T1]{fontenc}
\usepackage{lmodern}

\newcommand{\MLOPlong}{{Minimum Linear Ordering Problem}\xspace}
\newcommand{\MLOP}{{MLOP}\xspace}
\newcommand{\SBClong}{{Minimum Submodular $b$-Balanced Cut}\xspace}
\newcommand{\SBC}{{MSBC}\xspace}
\newcommand{\MSSClong}{{Minimum Sum Set Cover}\xspace}
\newcommand{\MSVClong}{{Minimum Sum Vertex Cover}\xspace}
\newcommand{\MLAlong}{{Minimum Linear Arrangement}\xspace}
\newcommand{\MCIGlong}{{Minimum Containing Interval Graph}\xspace}
\newcommand{\MSlong}{{Minimum SumCut}\xspace}
\newcommand{\MPlong}{{Minimum Profile}\xspace}
\newcommand{\MLSClong}{{Minimum Latency Set Cover}\xspace}
\newcommand{\WSSClong}{{Weighted Submodular Sparsest Cut}\xspace}
\newcommand{\WSSC}{{WSSC}\xspace}

\newcommand{\ind}{\mathbf{1}}
\newcommand{\OPT}{\operatorname{OPT}}
\newcommand{\val}{\operatorname{val}}
\newcommand{\hS}{h}
\newcommand{\Alg}{\operatorname{Alg}}

\newtheorem{theorem}{Theorem}
\newtheorem{lemma}[theorem]{Lemma}
\newtheorem{corollary}[theorem]{Corollary}

\theoremstyle{definition}

\theoremstyle{remark}

\newtheorem*{remark*}{Remark}

\numberwithin{equation}{section}

\title{Tight Algorithm and Hardness for Submodular Linear Ordering}

\author{
    Evan Abboud\textsuperscript{1} \qquad
    Roy Schwartz\textsuperscript{1} \\[2ex]
    \textsuperscript{1}Technion -- Israel Institute of Technology
}

\date{}

\begin{document}

\maketitle

\begin{abstract}
We consider the Minimum Linear Ordering Problem: given a ground set $N$ of cardinality $n$ and a non-negative set function $f\colon 2^N\rightarrow \mathbb{R}_{\geq 0}$, the goal is to find an ordering $\pi$ of $N$ that minimizes the sum of the values of $f$ over all prefixes of $\pi$.
This problem has been studied for various classes of set functions, and the case of a submodular $f$ is of special interest, as it captures classic problems including Minimum Linear Arrangement and Minimum Containing Interval Graph.
In this work, we resolve the approximability of the Minimum Linear Ordering Problem for a general submodular $f$ by establishing matching upper and lower bounds and present:
$(1)$ a polynomial-time algorithm achieving an $O(\sqrt{n/\ln n})$-approximation; and
$(2)$ a matching information-theoretic hardness result, showing that no algorithm evaluating $f$ a polynomial number of times can achieve an $o(\sqrt{n/\ln n})$-approximation.
Previously, the best known hardness of approximation was $2$, and an $O(\sqrt{n/\ln n})$-approximation was known only for the special case where $f$ is both submodular and symmetric.
\end{abstract}

\section{Introduction}\label{sec:intro}
We consider the \MLOPlong (\MLOP).
In this problem, we are given a ground set $N$ of cardinality $n$ and a non-negative set function
$f \colon 2^{N} \to \mathbb{R}_{\ge 0}$.
The goal is to find an ordering of the elements of $N$, i.e., a permutation $\pi \colon N \to \{1,\ldots,n\}$, that minimizes
the sum of the values of $f$ over all prefixes of $\pi$. 
Formally, the goal is to minimize:
\begin{align}
\sum_{i=1}^{n-1} f(N_{\pi,i}), \label{objective}
\end{align}
where $N_{\pi,i}$ denotes the first $i$ elements of $N$ with respect to $\pi$, i.e., $N_{\pi,i}\triangleq \{ u\in N\colon \pi(u)\leq i\}$.
For simplicity of presentation, we denote by $\val(\pi) \triangleq \sum_{i=1}^{n-1} f(N_{\pi,i})$ the objective value of an ordering $\pi$.

The problem was introduced in its full generality by Iwata, Tetali and Tripathi~\cite{iwata2012approximating}\footnote{The objective in~\cite{iwata2012approximating} is defined by summing over $i=0,\ldots,n$, which differs from the above definition by the ordering-independent term $f(\varnothing)+f(N)$, where $f(\varnothing)$ corresponds to $i=0$ and $f(N)$ corresponds to $i=n$.
Since $f$ is non-negative, removing $i=0$ and $i=n$ from the sum can only improve multiplicative approximation guarantees.
}, and has been studied for various types of set functions and special cases.
Two notable types of set functions are submodular and supermodular functions.
A set function $f\colon 2^N\to \mathbb{R}$ is submodular if: 
\begin{align}
    f(A)+f(B)\geq f(A\cup B) + f(A\cap B), ~~~~~~~~~~~\forall A, B\subseteq N,\nonumber
\end{align}
and $f$ is supermodular if $-f$ is submodular.

For a supermodular $f$, MLOP captures the classic \MSSClong problem.
In this problem, the ground set $N$ consists of $n$ sets that cover $m$ elements.
Given an ordering $\pi$ of $N$, every element is associated with a cover-time: the position of the first set in $\pi$ that covers the element.
The goal is to find an ordering $\pi$ that minimizes the sum of cover-times over all $m$ elements.
An interesting special case of \MSSClong is \MSVClong, where each element appears in two sets.

The study of both \MSSClong and \MSVClong was initiated by Feige, Lov\'{a}sz and Tetali~\cite{feige_MSSC}, and has admitted improved upper and lower bounds since its introduction, e.g., \cite{barenholz2006improved,bansal2021improved,iwata2012approximating,stankovic22} (the reader is referred to Section~\ref{sec:relatedwork} for a brief overview of related work).
For a general non-negative supermodular $f$, the approximability of \MLOP is resolved.
A $4$-approximation is given by~\cite{iwata2012approximating}, and a matching hardness is given by~\cite{feige_MSSC}: for every $\varepsilon >0$ achieving an approximation of $(4-\varepsilon)$ for \MSSClong is NP-hard.

For a submodular $f$, \MLOP captures several fundamental graph ordering problems.
Two prominent examples include \MLAlong and \MCIGlong, 
whose study dates back as early as the work of Leighton and Rao~\cite{leighton1999multicommodity} (\MCIGlong is also known as \MSlong, see, e.g., D{\'\i}az, Gibbons, Paterson and Toran~\cite{diaz1991minsumcut}, and as \MPlong, see, e.g., Lin and Yuan~\cite{Lin94}).
In both problems, we are given an undirected graph $G=(V,E)$ where
$(1)$ in the former problem, the goal is to find an ordering $\pi$ of $V$ that minimizes the total length of edges in the ordering: $\sum _{(u,v)\in E}| \pi(u)-\pi(v)| $; and
$(2)$ in the latter problem, the goal is to find an interval graph $ G'=(V,E')$ on the same nodes as $G$ that contains $G$ as a subgraph, i.e., $ E\subseteq E'$, and minimizes $|E'|$.

Clearly, \MLOP captures \MLAlong as a special case with a symmetric submodular $f$ that is the cut function of $G$, i.e., $f(S)$ equals the number of edges of $G$ that cross the cut $S$.
To see why \MLOP captures \MCIGlong, the classic characterization of Ramalingam and Pandu Rangan~\cite{RAMALINGAM88} is utilized: a graph $G=(V,E)$ is an interval graph if and only if there exists an ordering $\pi$ of $V$ such that for every edge $(u,v)\in E$, if $\pi(u)<\pi(v)$, then for every node $w$ satisfying $ \pi(u)<\pi(w)<\pi(v)$ there exists an edge connecting $w$ and $v$.
Thus, choosing a submodular, but not symmetric, $f$ that is the node cut function of $G$, i.e., $f(S)$ equals the number of nodes outside of $S$ that have neighbors in $S$, establishes that \MCIGlong is captured by \MLOP with a general submodular and non-symmetric $f$.
The study of both \MLAlong and \MCIGlong has attracted much attention in recent decades, e.g.,~\cite{arora2009expander,BL84,charikar2010,even2000divide,FHL08,feige_MLA_BEST,leighton1999multicommodity,rao2005new,ravi1991ordering,seymour1995packing}
(the reader is referred to Section~\ref{sec:relatedwork} for a brief overview of related work).

In contrast to \MLOP with a supermodular $f$, which is resolved, much less is known for \MLOP with a submodular $f$.
Results are known for only two special cases of a submodular $f$.
First, consider the special case that $f$ is submodular and monotone, i.e., $ f(A)\leq f(B)$, $\forall A\subseteq B\subseteq N$ (\MLOP with a monotone submodular $f$ captures additional applications, e.g., the \MLSClong problem; see, e.g., Hassin and Levin~\cite{hassin2005approximation}). 
For this special case,~\cite{iwata2012approximating} presented an approximation of $2(1-1/(n+1))$.
This bound was further improved for specific monotone submodular functions $f$ by Farhadi, Gupta, Sun, Tetali and Wigal~\cite{farhadi2024hardness}.
Second, consider the special case that $f$ is submodular and symmetric, i.e., $ f(A)=f(N\setminus A)$, $\forall A\subseteq N$.
For this special case, it is known that any ordering $\pi$ provides an approximation of $O(n)$~\cite{farhadi2024hardness} and an improved approximation of $O(\sqrt{n/\ln{n}})$ is given by Katzelnick and Schwartz~\cite{katzelnick2023simple}.
For this special case,~\cite{iwata2012approximating} show that for every $\varepsilon >0$ any algorithm that evaluates $f$ a polynomial number of times cannot achieve an approximation of $(2-\varepsilon)$.

Summarizing, while the approximability of \MLOP with a supermodular $f$ is resolved, much less is known when $f$ is submodular.
Specifically, for a general (not necessarily monotone or symmetric) submodular $f$, the best known hardness bound for \MLOP is $2$, and no approximation algorithm is known to the best of our knowledge.
The goal of this work is to close this gap.

\paragraph*{Value-Oracle Model.} As is typical in submodular optimization problems, we assume that the algorithm is not given $f$ explicitly, but can access $f$ using a value-oracle as follows: for every $S\subseteq N$, the algorithm can query the oracle and obtain $f(S)$.
Hence, the running time of the algorithm is measured by the number of value-oracle queries it performs (as well as the number of arithmetic operations).

\subsection{Our Result}\label{sec:results}
In this work, we resolve the approximability of \MLOP with a general non-negative submodular $f$.

First, we present an algorithm that achieves an approximation of $O(\sqrt{n/\ln{n}})$ for \MLOP, as is summarized by the following theorem.
\begin{theorem}\label{thm:upper_bound}
There exists a polynomial-time algorithm that, given a non-negative submodular set function
$f$ over a ground set $N$ of size $n$,
outputs an ordering $\pi$ of $N$ such that:
\[
\val (\pi)
\le
O\!\left(\sqrt{n/\ln{n}}\right)\cdot \OPT.
\]
Here, $\OPT$ denotes the value of an optimal solution to the given instance of the \MLOPlong.
\end{theorem}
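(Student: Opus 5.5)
The plan is to reduce the ordering problem to a sequence of cardinality-constrained submodular minimization problems, solve each within a factor $\alpha=O(\sqrt{n/\ln n})$, and then stitch the resulting sets into a single chain using submodular uncrossing.

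\textbf{Lower bound on OPT.} For $k=1,\ldots,n-1$ let $h(k)\triangleq\min\{f(S)\colon S\subseteq N,\ |S|=k\}$. Every ordering has exactly one prefix of each size $k$, so
\[
\OPT\ \ge\ \sum_{k=1}^{n-1} h(k).
\]

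\textbf{Per-size sets.} For each $k$ (or for $k$ in a geometric grid, if the analysis allows it), I would compute a set $S_k$ with $|S_k|\approx k$ and $f(S_k)\le\alpha\cdot h(k)$. I expect the size to hold only up to a constant factor, which should be harmless. The intended tool is an approximation for minimizing a submodular function under a cardinality constraint, in the spirit of Svitkina--Fleischer, phrased either as \SBClong or as \WSSClong. The standard mechanism is the following:
\begin{itemize}
\item approximate $f$ by a function of the form $\sqrt{\sum_{u} w_u\ind_{u\in S}}$ (an ellipsoid approximation), with distortion $O(\sqrt{n})$;
\item sharpen the distortion to $O(\sqrt{n/\ln n})$ by guessing $\Theta(\ln n)$ elements of the optimal set, which costs only a polynomial number of guesses.
\end{itemize}
Non-symmetry of $f$ should not matter here, because the constraint is on $S$ itself.

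\textbf{Stitching into a chain.} The sets $S_1,\dots,S_{n-1}$ need not be nested. I would uncross them repeatedly: whenever $A$ and $B$ cross, replace them by $A\cup B$ and $A\cap B$. By submodularity, the total $f$-value of the family does not increase. The multiset of cardinalities only spreads out, so after relabeling, the $j$-th smallest set of the final chain has size at least roughly its target. The result is a chain $\varnothing\subset C_1\subset\cdots\subset C_m$ with
\[
\sum_j f(C_j)\le \alpha\sum_k h(k)\le \alpha\cdot\OPT .
\]
Two details need care: sizes can collide, and the top set may be $N$. Keeping the constant-factor slack in the sizes should absorb both.

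\textbf{Filling the gaps (main obstacle).} The ordering lists each $C_{j+1}\setminus C_j$ after $C_j$. The prefixes strictly between $C_j$ and $C_{j+1}$ have to be charged, and for a non-symmetric, non-monotone $f$ an intermediate set $C_j\subset X\subset C_{j+1}$ can in principle have much larger value than both endpoints. This interpolation step is where I expect the real difficulty to lie. My first attempt would bound such $X$ by
\[
f(X)\le f(C_j)+f(X\setminus C_j),
\]
and order the elements of each gap greedily so that the $f(X\setminus C_j)$ terms are controlled. To charge those terms to $\OPT$, I would use an extra lower bound of the form $\OPT\ge\Omega(\max_v f(\{v\}))$ or a related per-size quantity. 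I would also choose the target sizes so that gap lengths grow geometrically, which makes the number of intermediate prefixes in each gap proportional to the weight $h(k)$ already paid for sizes in that range.

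If this crude interpolation loses more than a constant factor, I would fall back to a recursive variant: inside each gap, call the same cardinality-constrained oracle on the contracted function $g(T)=f(C_j\cup T)$ over the ground set $C_{j+1}\setminus C_j$, which is again submodular. I would then verify that the resulting recursion on approximation factors telescopes to $O(\sqrt{n/\ln n})$ rather than multiplying across levels.
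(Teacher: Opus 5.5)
Your proposal has a genuine gap. It sits where you expected, but it is not an interpolation detail: the only lower bound you charge to, $\OPT\ge\sum_k h(k)$, can be arbitrarily loose.

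\textbf{Counterexample to the per-size bound.} Let $f$ be the cut function of a graph made of a clique on $n/2$ vertices plus $n/2$ isolated vertices. For $k\le n/2$, a set of $k$ isolated vertices has cut $0$. For $k>n/2$, the clique plus $k-n/2$ isolated vertices also has cut $0$. So $h(k)=0$ for every $k$, yet every ordering pays at least the linear-arrangement cost of the clique, which is $\Theta(n^3)$. Your stitched chain can therefore have total value $0$ while all of $\OPT$ lies inside one gap of length $n/2$. No argument ending in $\alpha\sum_k h(k)$ can be correct.

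\textbf{The gap-filling tools do not repair this.}
\begin{itemize}
\item The bound $\OPT\ge\Omega(\max_v f(\{v\}))$ is false. For $f(S)=M\cdot|S\cap\{v\}|$ one has $\OPT=0$, by placing $v$ last.
\item The uncrossing claim is false. Uncrossing preserves $\sum|C|$ but can push sets to $\varnothing$ and $N$; for example, $\{1\}$ and $\{2,3\}$ become $\varnothing$ and $\{1,2,3\}$. So large gaps are the generic case.
\item The contraction fallback $g(T)=f(C_j\cup T)$ comes with no lower bound in terms of $\OPT$. Comparing with prefixes $P$ of an optimal ordering only gives $g(P\cap G_j)\le f(C_j)+f(C_{j+1})+f(P)$, via $C_j\cup(P\cap G_j)=(C_j\cup P)\cap C_{j+1}$. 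So the sub-instance inherits terms $|G_j|\,(f(C_j)+f(C_{j+1}))$. An oracle that approximates $g$ multiplies these inherited terms by its factor again at every level, so as stated the factors compound rather than telescope.
\item Your per-size step is itself unsupported for non-monotone, non-symmetric $f$. The ellipsoidal approximation you cite is a tool for monotone (or symmetric) submodular functions. The balanced-cut guarantee also degrades like $1/(b-b')$ for target sizes $k\ll n$.
\end{itemize}

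\textbf{What the paper does instead.} It avoids both the per-size bound and any gap filling.
\begin{itemize}
\item At a recursion node $A$, it approximates a \emph{global} cut $S\subseteq N$ that is balanced only with respect to the weights $\mathbf{1}_A$, using the original $f$. It then places $S\cap A$ before $A\setminus S$.
\item Every prefix of the output equals a read-once union/intersection formula in the cuts along the root-to-node path. Hence its value is at most the sum of those cut values, and no approximation factor is ever applied to the ancestors' values. This gives $\operatorname{val}(\pi)\le\sum_A |A|\, f(\operatorname{Alg}(A))$.
\item The matching lower bound is local rather than per-size. For pairwise disjoint $X_1,\dots,X_\ell$, one has $\sum_i |X_i|\, f(S_i^\star)\le 6\,\OPT$, where $S_i^\star$ is an optimal cut balanced with respect to $\mathbf{1}_{X_i}$. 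The proof charges to the prefixes of an optimal ordering that end at the middle-ranked elements of each $X_i$.
\item In the clique example, taking $X$ equal to the clique makes this bound $\Theta(n^3)$, which is exactly what $\sum_k h(k)$ misses.
\item Applying the bound level by level, and using the geometric decay of $|A|$ down the recursion, yields $O(\sqrt{n/\ln n})$.
\end{itemize}
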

Previously, an approximation of $ O(\sqrt{n/\ln{n}})$, given by Katzelnick and Schwartz \cite{katzelnick2023simple}, was known only in the special case where $f$ is not only submodular but also symmetric.
To the best of our knowledge, no algorithm for \MLOP with a general non-negative submodular $f$ was previously known.
It should be noted that our algorithm differs from the algorithm of \cite{katzelnick2023simple}, as the latter algorithm and its analysis are crucially based on the symmetry of $f$.
Thus, a new approach is required (see Section~\ref{sec:techniques} for an overview of our techniques).

Second, we present a hardness result for \MLOP that matches the guarantee of Theorem~\ref{thm:upper_bound}, as is summarized by the following theorem.
\begin{theorem}\label{thm:lower_bound}
No (possibly randomized) algorithm that makes a polynomial number in $n$ of value-oracle queries
can achieve an $o(\sqrt{n/\ln n})$-approximation for the \MLOPlong
with a non-negative submodular objective $f$ over a ground set $N$ of size $n$.
\end{theorem}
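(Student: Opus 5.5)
The plan is the standard hidden-set information-theoretic argument, as in the hardness results for submodular minimization under cardinality constraints. Set $k \triangleq \lceil \sqrt{n \ln n}\,\rceil$ and let $A\subseteq N$ be a uniformly random hidden set with $|A| = k$. We build two non-negative submodular functions. The first, $f$, is symmetric under permutations of $N$, so it depends only on $|S|$. The second, $g_A$, agrees with $f$ on every set $S$ for which $|S\cap A|$ is close to its expectation $|S|k/n$. Informally, $g_A$ rewards placing the elements of $A$ first, so $\OPT(g_A)$ is small. Every ordering, however, pays roughly the same as under $f$ unless its prefixes are correlated with $A$.

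A concrete candidate is
\[
f(S)=\min\{|S|,\ k\}
\]
minus a small correction, and
\[
g_A(S)=\min\bigl\{|S|,\ k,\ |S\setminus A| + \beta \bigr\},
\]
with threshold $\beta \approx \ln n$ and a truncation that keeps $g_A$ submodular. The reason to use a minimum is that a minimum of a modular function and constants, each of the form $c+|S\cap B|$, can fail submodularity. So the first step is to fix a family that is provably submodular. A safe choice is a concave function of $|S\setminus A|+\alpha|S\cap A|$ together with a concave cap in $|S|$, or the standard construction $g_A(S)=\min\{|S|,\ |S\setminus A|+\beta,\ k\}$. Submodularity of the latter is checked by writing it as a minimum of functions whose differences are monotone and verifying the marginals by cases.

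We then need to compare the optimal values, and we also need $g_A$ to be non-symmetric so that prefix costs can be driven down. Under $g_A$, the ordering that lists $A$ first has prefix values of order $\min\{i,\beta\}$ for $i \le k$. After that the relevant quantity is $|S\setminus A|+\beta$, but this grows again, so we must make $g_A$ drop after the prefix $A$ is complete. The fix is to add a term penalizing a missing part of $A$, for example
\[
g_A(S)=\min\bigl\{\,|S|,\ \beta+|A\setminus S|\cdot c\,,\ \ldots\bigr\},
\]
tuned so that $\OPT(g_A)\approx n\beta$. By contrast, $f$ should give every prefix of size between $k$ and $n-k$ value about $k$, so that $\OPT(f)\approx nk$. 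The ratio is then
\[
\frac{k}{\beta}=\sqrt{n/\ln n}.
\]

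Indistinguishability is shown via a Chernoff bound. For any fixed query $S$, the overlap $|S\cap A|$ is hypergeometric with mean $|S|k/n \le k^2/n = \ln n$. It deviates from the range where $f(S)=g_A(S)$ with probability $n^{-\omega(1)}$ once $\beta = C\ln n$. A union bound over the polynomially many adaptively chosen queries follows the usual argument: run the algorithm against $f$, and note that every query is answered identically with high probability. Hence the algorithm's output ordering $\pi$ is, with high probability, the same on $f$ and on $g_A$. That ordering is fixed independently of $A$, so each of its prefixes has overlap with $A$ concentrated near its mean. Consequently $\val_{g_A}(\pi)\approx \val_f(\pi) = \Omega(nk)$, while $\OPT(g_A)=O(n\ln n)$. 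Yao's principle then extends the bound to randomized algorithms.

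The main obstacle is designing $g_A$ so that three requirements hold simultaneously: it is submodular, it is non-negative, and it has $\OPT = O(n\ln n)$ while matching $f$ on typical sets. The last requirement forces the advantage of $A$ to persist across all $n$ prefixes. This is exactly where symmetry must be broken, since a symmetric function only rewards an ordering through its cut values. I would first try
\[
g_A(S)=\min\bigl\{\,|S\cap \bar A| + \beta,\ \ k\,,\ |S|\,\bigr\} + h(S),
\]
where $h$ is a modular-like correction that charges elements of $A$ negatively late in the order. Failing that, I would scale the construction by replacing $A$ with $A\cup B$ blocks, so that the cheap ordering of $g_A$ keeps every prefix overlap abnormally large throughout.
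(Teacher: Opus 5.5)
\textbf{There is a genuine gap: your proposal never produces a working $g_A$, and the single-hidden-set framework you commit to cannot supply one.} You need a function that is simultaneously submodular, non-negative, indistinguishable from $f$, and has small optimum.

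Your only explicit candidate, $g_A(S)=\min\{|S|,k,|S\setminus A|+\beta\}$, has $\OPT(g_A)=\Theta(nk)$, as you note yourself. The proposed repairs are never specified or verified: the term $\beta+c\,|A\setminus S|$ followed by ``$\ldots$'', the correction $h$, and the ``$A\cup B$ blocks''.

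More fundamentally, consider any $g_A(S)=G(|S\cap A|,|S\setminus A|)$, which covers all your candidates.
\begin{itemize}
\item Submodularity applied to two elements of $\bar A$ gives $2G(a,b+1)\ge G(a,b)+G(a,b+2)$, so $b\mapsto G(k,b)$ is concave.
\item A fixed $S$ with $|S|=n-k$ contains $A$ with probability $\Theta(e^{-k^2/n})=\Theta(1/n)$, which is far from $n^{-\omega(1)}$. So indistinguishability forces $G(k,n-2k)=f(S)\approx k$.
\item Together with $G(k,0)\ge 0$, concavity gives $G(k,b)\ge \frac{b}{n-2k}\,G(k,n-2k)$.
\end{itemize}
Hence the ordering you intend to be cheap, listing $A$ first, pays $\Omega(nk)$ on its prefixes of sizes $k,\dots,n-k$ alone. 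That is the same order as $\OPT(f)$. This is the obstruction the paper alludes to when it notes that hiding a single random set yielded only the constant bound $2$.

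Separately, $\beta=C\ln n$ with constant $C$ only gives per-query failure probability $n^{-\Omega(C)}$; you need $\beta=\omega(\ln n)$. Also, an $A$-dependent modular correction $h$ would destroy the exact equality $f(S)=g_A(S)$ on typical sets.

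\textbf{The missing idea is to hide an entire uniformly random ordering $\pi$, so that every prefix gets its own cheap term.} The paper takes $f_1(S)=\min\{|S|,|\bar S|\}/2$ and
\[
f_2(S)=\min\Bigl\{f_1(S),\ \min_{1\le i\le n-1}\bigl(i/2+\delta\bigr)+|S\cap R_{\pi,i}|-|S|/2\Bigr\},
\]
where $R_{\pi,i}$ is the suffix after position $i$.

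What keeps $f_2$ submodular, and what your proposal does not identify, is that the suffixes form a nested chain. For $U_m\subseteq\cdots\subseteq U_0$ and $g_i(S)=\alpha_i+|S\cap U_i|$, the difference $g_i-g_j=(\alpha_i-\alpha_j)+|S\cap(U_i\setminus U_j)|$ is monotone for $i<j$. So Lov\'asz's lemma plus induction shows $\min_i g_i$ is submodular. For unrelated hidden sets this argument fails, so your ``blocks'' idea would need exactly this nesting.

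With this construction:
\begin{itemize}
\item $f_2(N_{\pi,i})\le\delta$ for every $i$, so $\OPT_2\le n\delta$, while $\OPT_1\ge n^2/16$.
\item Hoeffding's inequality for sampling without replacement, applied to each $|S\cap R_{\pi,i}|$ and union-bounded over $i$, gives $\Pr[f_1(S)\neq f_2(S)]\le n^{-\omega(1)}$ once $\delta=\omega(\sqrt{n\ln n})$.
\end{itemize}
The resulting gap $n/(16\delta)$ matches the claimed bound.
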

There are two things to note regarding Theorem~\ref{thm:lower_bound}.
First, the hardness result in Theorem~\ref{thm:lower_bound} improves the previously best known hardness of $2$, which was given by Iwata, Tetali and Tripathi~\cite{iwata2012approximating}, and asymptotically matches the $O(\sqrt{n/\ln{n}})$-approximation guarantee of Theorem~\ref{thm:upper_bound}.
Thus, the approximability of \MLOP with a submodular $f$ is resolved.
Second, it should be noted that the hardness result of Theorem~\ref{thm:lower_bound}, as is common for submodular optimization problems in the value-oracle model, is information-theoretic and does not rely on any computational complexity assumptions.

\subsection{Our Techniques}\label{sec:techniques}
\paragraph*{Algorithmic Approach.}
A classic framework of recursive balanced cutting, which dates back to Bhatt and Leighton~\cite{BL84}, can be successfully applied to many graph ordering problems.
In a nutshell, given a subset of vertices $A$, the algorithm:
$(1)$ computes a suitable balanced cut $ S\subseteq A$ in the subgraph $G[A]$ induced by $A$;
$(2)$ places the vertices of $S$ before the vertices of $A\setminus S$; and
$(3)$ recurses on $G[S]$ and $G[A\setminus S]$, the subgraphs induced by $S$ and $A\setminus S$, respectively, to determine the ordering within each part.
For example, a suitable balanced cut problem in step $(1)$ above is the classic Minimum $b$-Balanced Cut problem for \MLAlong and the classic Node $\rho$-Separator problem for \MCIGlong.

Roughly speaking, if one has an $\alpha$-approximation for the relevant balanced cut problem, it is known that the above framework provides an approximation of $O(\alpha\ln{n})$ for many ordering problems, including both \MLAlong and \MCIGlong.
Intuitively, two essential structural observations enable one to prove this approximation.
To simplify the presentation of these observations and keep the discussion as informal as possible, in what follows we focus on \MLAlong (a similar discussion involving nodes, as opposed to edges, also applies to \MCIGlong).
First, each recursive call operates on $G[A]$, i.e., an induced subgraph whose nodes are exactly $A$ and whose edges consist only of edges with both endpoints in $A$.
Consequently, once a cut is made, edges crossing between the two sides $S$ and $ A\setminus S$ are permanently removed.
Second, the objective function changes across recursive calls, as it becomes the cut function of the relevant induced subgraph.
Together, these two observations ensure that each edge is charged only once, resulting in the claimed approximation.

Unfortunately, when considering \MLOP, these observations fail and the framework breaks down.
The main reason is that there is no clear notion of an induced subproblem, since the input consists only of $N$ and the set function $f$.
Therefore, there is no obvious underlying structure that can be deleted as the recursion proceeds.
Moreover, it can be easily shown that trivially applying the recursive balanced cutting framework by recursively computing balanced cuts on each subset $A$ with respect to the original $f$ fails.

Katzelnick and Schwartz~\cite{katzelnick2023simple} address this difficulty for the special case where $f$ is not only submodular but also symmetric.
Specifically, when recursing on $A$,~\cite{katzelnick2023simple} find a balanced cut $S\subseteq A$ of $A$ while using a different objective function.
Intuitively, $f$ is modified in a manner that depends on $A$, allowing one to mimic the induced subgraph behavior and the proof of the recursive balanced cutting framework for \MLAlong.
However, this solution relies crucially on the symmetry of $f$, and without it the algorithm of~\cite{katzelnick2023simple} is not even properly defined.
Thus, a different approach is required when considering \MLOP with a general, and not necessarily symmetric, submodular $f$.

Our approach is remarkably simple: when recursing on $A$, we find a {\bf{global cut}} $S\subseteq N$ using the {\bf{original set function}} $f$, where the only requirement is that $S$ is balanced with respect to $A$, i.e., $ b\cdot \lvert A\rvert\leq \lvert S\cap A\rvert\leq (1-b)\cdot \lvert A\rvert$ for some constant $ 0<b< \nicefrac{1}{2}$.
Then, our algorithm places $S\cap A$ before $A\setminus (S \cap A)$ and recurses on both.
Hence, instead of restricting the cut $S$ to be a subset of $A$ and changing the function $f$, as is done when applying the recursive balanced cutting framework to \MLAlong and in~\cite{katzelnick2023simple} for \MLOP with a symmetric submodular $f$, we:
$(1)$ compute the cut $S$ over the entire ground set $N$; and
$(2)$ use the original $f$ throughout the recursion regardless of the current set $A$.

At first glance, this approach appears to be inherently wasteful.
To exemplify this intuition, focus on \MLAlong.
For this problem, our approach reduces to the following when recursing on $A\subseteq V$:
we compute an $\alpha$-approximate global cut $S\subseteq V$ that is balanced with respect to $A$ and minimizes the total number of edges crossing $S$ in the entire graph $G$.
Therefore, the cost of $S$ is determined also by edges outside of $G[A]$, e.g., edges with both endpoints outside of $A$ that cross $S$.
Thus, the induced subgraph $G[A]$ is ignored when determining the cost of $S\subseteq V$.
Nonetheless, a surprising consequence of our work is that the above algorithm, which seems wasteful, also provides an approximation of $ O(\alpha \ln{n})$ for \MLAlong.

To analyze our algorithm, we first observe that there is a one-to-one correspondence between (non-base) recursive calls and prefixes in the final ordering of the algorithm.
We then prove that each prefix admits a recursive formula that consists of only union and intersection operations over the approximate balanced global cuts computed along the chain of recursive calls starting with the initial call on $N$ and ending with the recursive call corresponding to the prefix.
The crucial observation is that each such balanced global cut appears exactly once in this formula, i.e., the formula is read-once.
This property allows us to exploit the submodularity and non-negativity of $f$ to upper bound the value of the final ordering by a sum over all (non-base) recursive calls, where each (non-base) recursive call on $A$ contributes to this sum a value of $\lvert A\rvert\cdot f(S)$ (here $S\subseteq N$ is the $\alpha$-approximate balanced global cut computed by the algorithm in the recursive call on $A$).
This suffices to prove the desired upper bound on the value of the final ordering of the algorithm.

\paragraph*{Hardness Approach.}
Our hardness is proved using an indistinguishability argument in the value-oracle
model, a classic approach in submodular optimization, see, e.g.,~\cite{NF78,svitkina2011submodular,Feige_Hardness}.
In this approach, one typically constructs two submodular functions: a deterministic function
$f_1$ and a randomized function $f_2$. 
The functions are required to be indistinguishable to any algorithm making a polynomial number of value-oracle queries, yet the values of their optimal solutions differ substantially.
Thus, any algorithm making a polynomial number of value-oracle queries achieving an approximation better than the ratio of the optimal solutions would therefore be able to distinguish
between $f_1$ and $f_2$, contradicting their indistinguishability.
Hence, the optima ratio translates into information-theoretic hardness of the same value.

Existing hardness constructions following this framework, including hardness constructions based on the symmetry gap machinery of Vondr\'{a}k~\cite{vondrak2013symmetry_SymmetryGap}, typically hide a random set.
Such constructions are well suited to various minimization and maximization submodular optimization problems whose
solution is characterized by a single set.
Unfortunately, such an approach is not sufficient for \MLOP, whose solution consists of an ordering of the entire ground set $N$.
Consequently, these techniques do not yield a strong hardness result for \MLOP (the current best known hardness for \MLOP with a submodular $f$ equals only $2$~\cite{iwata2012approximating} and it is obtained by hiding a single random set).

Our construction differs in that the randomized function $f_2$ hides a uniform random
ordering rather than just a set. 
Consequently, this results in incorporating all
suffixes of the random ordering into the definition of $f_2$.
While this is essential for obtaining the desired gap for \MLOP, it also introduces new challenges.
The main challenge is that despite the many suffixes incorporated into the definition of $f_2$, it must remain
non-negative, submodular, and indistinguishable from $f_1$.
To guarantee that all of the above are achieved simultaneously for all possible random orderings, we leverage an observation of Lov\'{a}sz~\cite{lovasz1983submodular} that provides a sufficient condition when the minimum of two submodular functions is also submodular.

\subsection{Related Work}\label{sec:relatedwork}
\MSSClong was first considered by Feige, Lov\'{a}sz and Tetali~\cite{feige_MSSC}, who presented an approximation of $4$ for it and showed that this approximation is tight by proving that, for every $ \varepsilon>0$, achieving an approximation of $(4-\varepsilon)$ is NP-hard.
\MSVClong was also introduced in~\cite{feige_MSSC}, who presented an approximation of $2$, and this approximation was subsequently improved to $\approx 1.999946$ by Barenholz, Feige and Peleg~\cite{barenholz2006improved} and to $\nicefrac{16}{9}$ by Bansal, Batra, Farhadi and Tetali~\cite{bansal2021improved}, which is the current best known approximation.
Focusing on lower bounds for \MSVClong, it was proved in~\cite{feige_MSSC} that there exists a constant $\rho>1$ for which it is NP-hard to obtain an approximation better than $\rho$, and Stankovi\'{c}~\cite{stankovic22} proved that an approximation better than $ 1.014$ is not possible assuming the Unique Games Conjecture.

Both \MLAlong and \MCIGlong admit a rich history,
and as previously mentioned, the recursive balanced cutting framework provides an approximation of $ O(\alpha\ln{n})$ for both problems (where $\alpha$ is the approximation achieved for the relevant balanced cut problem).
Specifically, for \MLAlong, Leighton and Rao~\cite{leighton1999multicommodity} present an approximation algorithm with $ \alpha=O(\ln{n})$ which was subsequently improved by Arora, Rao and Vazirani~\cite{arora2009expander} to $ O(\sqrt{\ln{n}})$, yielding overall approximations of $ O(\ln^2{n})$ and $ O(\ln^{\nicefrac{3}{2}}{n})$, respectively.
Similarly, for \MCIGlong, Ravi, Agrawal and Klein~\cite{ravi1991ordering} and Feige, Hajiaghayi and Lee~\cite{FHL08} provide overall approximations of $ O(\ln^2{n})$ and $ O(\ln^{\nicefrac{3}{2}}{n})$, respectively.

Going beyond recursive balanced cutting, both problems admit improved results: Even, Naor, Rao and Schieber~\cite{even2000divide} provide an overall approximation of $O(\ln{n}\ln{\ln{n}})$ using the cutting scheme of Seymour~\cite{seymour1995packing};
Rao and Richa~\cite{rao2005new} improve the approximation factor to $O(\ln{n})$; and finally, by building on the work of~\cite{arora2009expander}, both Charikar, Hajiaghayi, Karloff and Rao~\cite{charikar2010} and Feige and Lee~\cite{feige_MLA_BEST} improve the approximation factor to $ O(\sqrt{\ln{n}}\ln{\ln{n}})$.

A substantial line of work has established hardness results for submodular optimization problems in the value-oracle model using indistinguishability-based constructions.
This approach dates back to Nemhauser and Fisher~\cite{NF78} from the late $70$'s, and subsequent works have applied this approach to a wide range of submodular optimization problems, e.g.,~\cite{Feige_Hardness,goemans2009approximating,svitkina2011submodular,iwata_hardness_example,goel2009approximability,santiago2019multivariate,mirrokni2008tight}.

Indistinguishability-based constructions were subsequently captured by the symmetry gap machinery, introduced by Vondrák~\cite{vondrak2013symmetry_SymmetryGap} to prove information-theoretic hardness bounds for submodular maximization problems.
Since its introduction, the symmetry gap machinery has been used to obtain multiple value-oracle hardness results for a variety of problems, see, e.g.,~\cite{gharan2011submodular,azar2011submodular,filmus2025separating}.
Subsequently, Ene, Vondrák, and Wu~\cite{ene2013local} showed that the symmetry gap machinery can also be used to derive value-oracle hardness results for submodular minimization problems.
Moreover, Dobzinski and Vondrák~\cite{dobzinski2012query} showed how symmetry gap based value-oracle hardness results can be translated into computational complexity inapproximability results, assuming $ \text{NP}\neq \text{RP}$, when $f$ is given explicitly.

\paragraph*{Paper Organization.}
Section~\ref{sec:preliminaries} contains several required preliminary results.
Section~\ref{sec:algorithm} focuses on our algorithm, whereas Section~\ref{sec:hardness} contains the hardness result.

\section{Preliminaries}\label{sec:preliminaries}
Throughout the paper $\OPT$ denotes the value of an optimal solution for \MLOP.

\paragraph*{Submodular $b$-Balanced Cut.}
Our algorithm requires an approximation algorithm to a weighted variant of the \SBClong problem.
In the \SBClong problem, we are given a ground set $V$ of size $n$ that is equipped with non-negative weights $ w\colon V\rightarrow \mathbb{R}_{\geq 0}$, a non-negative submodular function $ f\colon 2^V \rightarrow \mathbb{R}_{\geq 0}$, and a balancing parameter $ b<\nicefrac{1}{2}$.
A subset $ A\subseteq V$ is a $b$-balanced cut if $ b\cdot w(V)\leq w(A)\leq (1-b)\cdot w(V)$, where $w(S)\triangleq \sum _{u\in S}w(u)$ for every $ S\subseteq V$.
The goal is to find a $b$-balanced cut $A\subseteq V$ minimizing $f(A)$:
\begin{align}
  \min \{ f(A)\colon A\subseteq V,\; b\cdot w(V)\leq w(A)\leq (1-b)\cdot w(V) \} . \nonumber
\end{align}
The following theorem was given by Svitkina and Fleischer~\cite{svitkina2011submodular} and it provides a bicriteria approximation for \SBClong.
\begin{theorem}[Svitkina and Fleischer~\cite{svitkina2011submodular}]
\label{thm:balanced_cut_algorithm}
There exists a polynomial time algorithm that given a ground set $V$ of size $n$ equipped with non-negative weights $w\colon V\to \mathbb{R}_{\geq 0}$, a non-negative submodular function $f\colon 2^V\rightarrow \mathbb{R}_{\geq 0}$, a balancing parameter $ 0<b\leq \nicefrac{1}{2}$, and a bicriteria parameter $ 0<b'<b$, finds a $\nicefrac{b'}{2}$-balanced cut $S\subseteq V$ satisfying:
\begin{align}
    f(S)\leq O\left( \frac{\sqrt{k/\ln{k}}}{b-b'}\right)\cdot \min \{ f(A)\colon A\subseteq V, \; b\cdot w(V)\leq w(A)\leq (1-b)\cdot w(V) \}, \nonumber
\end{align}
where $k=\lvert\{ u\in V\colon w(u)>0\}\rvert$ is the size of the support of $w$.
\end{theorem}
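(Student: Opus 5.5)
This statement is quoted from Svitkina and Fleischer~\cite{svitkina2011submodular}, and I would reproduce their argument in the following form. The approach reduces the balanced problem to a sequence of unconstrained submodular minimizations, then glues the resulting pieces with submodularity. Throughout, $A^*$ denotes an optimal $b$-balanced cut and $\OPT_b=f(A^*)$. Since $\OPT_b$ is unknown, I would guess it up to a factor $2$ over a polynomial range of candidate values, and from now on work with a guess $\lambda\approx\OPT_b$. Only the $k$ elements with $w(u)>0$ matter for balance, so $n$ may be replaced by $k$ everywhere in the balance arguments.

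\textbf{Step 1: a certificate set.} Choose a parameter $\alpha=\Theta(\sqrt{k/\ln k})$. Form a set $T$ by sampling (or guessing) about $\sqrt{k\ln k}$ elements of the support of $w$. The aim is that, with non-negligible probability, a large share of $T$ lies inside $A^*$, or else $A^*$ has a cheap witness determined by $T$. For each such $T$ and each $\mu$ of the form $\lambda/\alpha$ times a power of $2$, I would minimize the submodular function
\[
g_{T,\mu}(S)=f(S)-\mu\cdot w(S\cap T)
\]
exactly in polynomial time. The key claim is that every minimizer $S$ satisfies $f(S)\le \mu\, w(S\cap T)$, because $g_{T,\mu}(\varnothing)=f(\varnothing)\ge 0$. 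Moreover, comparing $S$ with $A^*$ shows that $S$ captures a constant fraction of the $T$-weight that $A^*$ captures, at cost at most $O(\alpha)\cdot\OPT_b$ per unit of balance. In other words, each minimizer has cost proportional to the weight it collects.

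\textbf{Step 2: greedy gluing.} Starting from $U=\varnothing$, I would repeatedly run Step 1 with the weights of elements already in $U$ zeroed out. Each round produces a set $S_j$ whose ratio $f(S_j)/w(S_j\setminus U)$ is at most $O(\alpha)\cdot\OPT_b/(b\,w(V))$. I would then set $U\leftarrow U\cup S_j$, stopping as soon as $w(U)\ge \tfrac{b'}{2}w(V)$. Submodularity and non-negativity give $f(U)\le\sum_j f(S_j)$, and the ratio bound telescopes to $f(U)\le O(\alpha)\cdot\OPT_b$. For the upper side of the balance constraint I would use that, as long as $w(U)<b'w(V)$, the residual instance still contains at least $(b-b')w(V)$ weight of $A^*$. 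So $A^*\setminus U$ remains a good witness for the ratio, and this is exactly where the loss of $1/(b-b')$ enters. If a single $S_j$ overshoots past $(1-b'/2)w(V)$, I would instead take the complement side inside that round, or truncate $S_j$. Either choice keeps the output $\nicefrac{b'}{2}$-balanced.

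\textbf{Main obstacle.} The hardest step is obtaining $\sqrt{k/\ln k}$ rather than something like $\sqrt{k}\log k$. This requires sizing $T$ so that two competing losses balance. The first is the loss from $T$ missing most of $A^*$, which scales like $k/|T|$. The second is the number of guesses for $T$, which must stay polynomial so that a union or probability argument succeeds; this allows $|T|$ up to about $\sqrt{k\ln k}$, and since $\ln\binom{k}{t}\approx t\ln k$ is the relevant budget, this gives a $\ln k$ saving. I would first try to prove Step 1's ratio claim by a case analysis on $|A^*|$. If $|A^*|\lesssim\alpha$, one can enumerate small sets directly up to $\ln k$ slack. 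If $|A^*|\gtrsim\alpha$, a random $T$ of size about $k/\alpha$ hits $A^*$ proportionally with high probability, which makes the minimizer of $g_{T,\mu}$ cheap relative to the weight it collects.
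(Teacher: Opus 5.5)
\textbf{There is a genuine gap, and it sits in Step~1.} The modular term in $g_{T,\mu}(S)=f(S)-\mu\,w(S\cap T)$ is a pure reward. Nothing penalizes a minimizer for collecting too much weight, so the only guarantee you can extract is the one-sided ratio $f(S)/w(S\cap T)$, which says nothing about $w(\bar S)$.

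A concrete failure: let $f$ be the cut function of a connected graph, and let $\varnothing\neq T$ lie in the support. Then $f(V)=0<f(X)$ for $\varnothing\neq X\neq V$, and $w(X\cap T)<w(T)$ whenever $X\not\supseteq T$. Hence $V$ is the \emph{unique} minimizer of $g_{T,\mu}$, however well $T$ correlates with $A^*$. Step~2 then sets $U=V$ in its first round, and neither repair helps. The complement of $V$ is empty. For general non-symmetric $f$, neither $f(\bar S_j)$ nor the value of a truncation of $S_j$ is controlled by $f(S_j)$.

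Separately, your claimed inequality does not follow. Minimality only gives $g_{T,\mu}(S)\le g_{T,\mu}(\varnothing)=f(\varnothing)$, i.e., $f(S)\le\mu\,w(S\cap T)+f(\varnothing)$. The fact $f(\varnothing)\ge 0$ works against you, not for you.

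\textbf{The missing idea is a weighted submodular \emph{sparsest}-cut subroutine with \emph{signed} weights.} This is what the paper uses, following Svitkina and Fleischer. For a uniformly random $R\subseteq V$, set $\tilde w(u)=w(u)\,w(\bar R)$ for $u\in R$ and $\tilde w(u)=-w(u)\,w(R)$ for $u\notin R$. Then $\sum_{v\in X}\tilde w(v)\le w(X)\,w(\bar X)$ for every $X$; in particular the sum vanishes at $X=V$. So any $X$ with $f(X)-\alpha\sum_{v\in X}\tilde w(v)<0$ satisfies $f(X)<\alpha\,w(X)\,w(\bar X)$.

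The balanced cut is then built by peeling, as follows.
\begin{enumerate}[(i)]
\item On the residual weights $w'$, compute a sparsest cut $S_i$ and zero out its \emph{lighter} side $P_i$. The residual weight therefore stays above $\frac12(1-b')w(V)$, and this is what yields the upper side of balance.
\item Charge $f(S_i)$ to $w'(P_i)$ using $w'(S_i)\,w'(\bar S_i)\le w'(P_i)\,w(V)$, together with $w'(A^*)\,w'(\bar A^*)\ge (b-b')(1-b)\,w(V)^2$, which holds while the loop runs.
\item Output either the union $T_1$ of the light sides equal to $S_i$, or the intersection $\bar T_2$ of those $S_i$ whose complement was light. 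Bound $f(\bar T_2)$ via the submodular function $X\mapsto f(\bar X)$.
\end{enumerate}
Your intuition for where the $1/(b-b')$ loss enters is correct. It only goes through once the subroutine controls both sides of the cut.

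\textbf{Your account of the $\sqrt{\ln k}$ saving also does not hold up.} There are $\binom{k}{t}=\exp(\Theta(t\ln k))$ sets of size $t=\sqrt{k\ln k}$, so guessing $T$ is superpolynomial. Enumerating all sets of size $\Theta(\sqrt{k/\ln k})$ is superpolynomial for the same reason. A polynomial budget means $\ln\binom{k}{t}=O(\ln k)$, not $t\ln k$.

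In the paper the saving comes from anti-concentration of the random half-set $R$. Let $U^\star$ be a cut of sparsity below the guess $B$, let $D^\star=w(U^\star)\,w(\bar U^\star)$, and let $m$ be the number of support elements in $U^\star$. With $\varepsilon=\sqrt{\ln k/k}$, the event that $R$ contains at least $\frac{m}{2}(1+\varepsilon)$ of these elements has probability $\Omega(k^{-5/2})$. This is because $\varepsilon^2 m\le\ln k$, so the factor $e^{-\varepsilon^2 m}$ is at least $1/k$. Conditioned on this event, every demand pair separated by $U^\star$ contributes at least $\frac{\varepsilon}{2}d_{u,v}$ in expectation. Markov's inequality on the non-negative variable $D^\star-\sum_{v\in U^\star}\tilde w(v)$ then gives $\sum_{v\in U^\star}\tilde w(v)\ge\frac{\varepsilon}{4}D^\star$ with probability at least $\varepsilon/4$. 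With $\alpha=4\sqrt{k/\ln k}\cdot B$, this certifies $f(U^\star)-\alpha\sum_{v\in U^\star}\tilde w(v)<0$ with probability $\Omega(k^{-3})$ per trial, so polynomially many trials suffice.
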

There are two things to note regarding Theorem~\ref{thm:balanced_cut_algorithm}:
$(1)$ the approximation guarantee proved in~\cite{svitkina2011submodular} is only $ O(\sqrt{n/\ln{n}}/(b-b'))$ and not $ O(\sqrt{k/\ln{k}}/(b-b'))$, however essentially the same algorithm and proof as in~\cite{svitkina2011submodular} provide the improved guarantee of Theorem~\ref{thm:balanced_cut_algorithm}; and
$(2)$ the algorithm of~\cite{svitkina2011submodular} is randomized and fails with a probability that is arbitrarily small, e.g., exponentially small in $n$, however to simplify presentation throughout the paper we explicitly omit any discussion on success probability and condition on a successful execution of the algorithm of Theorem~\ref{thm:balanced_cut_algorithm}.
A formal discussion regarding both points above appears in Appendix~\ref{app:ssc_probability}.

\paragraph*{Submodularity Preserving Operations.}
Our tight information-theoretic hardness result requires the following lemma given by Lov\'{a}sz~\cite{lovasz1983submodular}.
\begin{lemma}[Lov\'{a}sz~\cite{lovasz1983submodular}]\label{lem:lovasz}
Let $f$ and $g$ be two submodular set functions over the same ground set such that $f-g$ is either monotone non-decreasing or monotone non-increasing.
Then, the set function $\min\{ f,g\}$ is also submodular.
\end{lemma}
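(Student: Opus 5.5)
We need to prove Lovász's lemma: if $f-g$ is monotone (non-decreasing or non-increasing), then $h=\min\{f,g\}$ is submodular.

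\textbf{Reduction to one case.} By the symmetry of swapping the roles of $f$ and $g$ ($g-f=-(f-g)$), it suffices to treat the case where $d\triangleq f-g$ is monotone non-decreasing. Then for every set $X$ we have $h(X)=f(X)$ if $d(X)\le 0$ and $h(X)=g(X)$ otherwise. Since $h\le f$ and $h\le g$ pointwise, it suffices, for given $A,B$, to find one of $f,g$ whose value at $A\cup B$ equals $h(A\cup B)$, and one whose value at $A\cap B$ equals $h(A\cap B)$, and then combine these with submodularity.

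\textbf{Case analysis.} Fix $A,B$.

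If $h(A)=f(A)$ and $h(B)=f(B)$, then
\[
h(A)+h(B)=f(A)+f(B)\ge f(A\cup B)+f(A\cap B)\ge h(A\cup B)+h(A\cap B).
\]
The case $h(A)=g(A)$ and $h(B)=g(B)$ is handled identically using $g$.

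The remaining mixed case is, without loss of generality, $h(A)=f(A)$ and $h(B)=g(B)$, so $d(A)\le 0\le d(B)$. By monotonicity of $d$, $d(A\cap B)\le d(A)\le 0$, hence $h(A\cap B)=f(A\cap B)$. Similarly $d(A\cup B)\ge d(B)\ge 0$, hence $h(A\cup B)=g(A\cup B)$. We therefore need
\[
f(A)+g(B)\ge g(A\cup B)+f(A\cap B).
\]
Rewrite $f=g+d$ on the left-hand side, and apply submodularity of $g$:
\[
f(A)+g(B)=g(A)+g(B)+d(A)\ge g(A\cup B)+g(A\cap B)+d(A).
\]
By monotonicity, $d(A)\ge d(A\cap B)$, so
\[
g(A\cup B)+g(A\cap B)+d(A)\ge g(A\cup B)+g(A\cap B)+d(A\cap B)=g(A\cup B)+f(A\cap B),
\]
which is exactly the required inequality.

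\textbf{Main obstacle.} The only nontrivial step is the mixed case. There, monotonicity of $d$ is used twice: first to identify which of $f,g$ attains the minimum at $A\cap B$ and at $A\cup B$, and second to bound the leftover term $d(A)\ge d(A\cap B)$. This completes the case analysis, and hence $\min\{f,g\}$ is submodular.
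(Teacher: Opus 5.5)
Your proof is correct. The paper gives no proof of this lemma; it cites Lov\'{a}sz and uses the result as a black box. Your case analysis is therefore a valid self-contained argument, and it is the standard one. The only nontrivial case is $h(A)=f(A)$, $h(B)=g(B)$, which you handle correctly via $f(A)+g(B)=g(A)+g(B)+d(A)\ge g(A\cup B)+g(A\cap B)+d(A\cap B)=g(A\cup B)+f(A\cap B)$.

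One expository slip, which does not affect correctness. Your reduction paragraph says you must find which of $f,g$ equals $h$ at $A\cup B$ and at $A\cap B$. That is backwards. Because $h\le f$ and $h\le g$ pointwise, the right-hand side $h(A\cup B)+h(A\cap B)$ is bounded above by either function evaluated there, so the values that must be matched exactly are those at $A$ and $B$. That is in fact how your case split proceeds. Consequently, the first use of monotonicity in the mixed case, identifying the minimizer at $A\cap B$ and $A\cup B$, is superfluous. Only the single comparison $d(A)\ge d(A\cap B)$ is needed.
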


\paragraph*{Indistinguishability.}
We say that an algorithm distinguishes two functions $f_1$ and $f_2$ if its output
differs when given value oracle access to $f_1$ versus value oracle access to $f_2$. 
The following lemma, due to Svitkina and Fleischer~\cite{svitkina2011submodular},
is used to establish an approximation lower bound in the value-oracle model;
it requires two conditions: the first is sufficient to guarantee oracle
indistinguishability between the functions, and the second requires a
multiplicative gap between their optimal objective values.
\begin{lemma}[Svitkina and Fleischer~\cite{svitkina2011submodular}]
\label{lem:general_Indistinguishability_def}
Let $\mathcal{P}$ be a minimization problem defined on set functions over a ground
set $N$ of size $n$.
If there exist a set function $f_1$ on $N$ and a distribution $\mathcal{D}$ over set
functions on $N$ such that
$(1)$ for every set $S \subseteq N$,
$\Pr_{f_2 \sim \mathcal{D}}\!\left[f_1(S) \neq f_2(S)\right] \le n^{-\omega(1)}$;
and $(2)$ for every $f_2$ drawn from $\mathcal{D}$,
$\OPT_1 \ge \gamma \cdot \OPT_2$ for some $\gamma \ge 1$, where $\OPT_1$ and
$\OPT_2$ denote the optimal values of $\mathcal{P}$ on inputs $f_1$ and $f_2$,
respectively, then no (possibly randomized) algorithm making a polynomial number
of value-oracle queries can approximate $\mathcal{P}$ within a factor of $o(\gamma)$.
\end{lemma}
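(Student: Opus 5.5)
The plan is an averaging argument (Yao's principle) combined with a union bound over the queries of a single fixed execution. I will argue about an algorithm $\Alg$ that is assumed to achieve approximation $\rho = o(\gamma)$ on every input and derive a contradiction. First I normalize $\Alg$: after computing its output solution, it additionally queries the oracle on every set needed to evaluate the objective of that solution. For \MLOP these are the $n-1$ prefixes $N_{\pi,i}$. This adds only a polynomial number of queries. After the normalization, the value of the output under the oracle's function is determined by the transcript of oracle answers.

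\textbf{Deterministic case.} Fix a deterministic $\Alg$ making at most $q = \mathrm{poly}(n)$ queries. Run it on $f_1$. This produces a fixed sequence of queried sets $S_1,\dots,S_q$, which are chosen adaptively but are fully determined by $f_1$, together with a fixed output $\pi$. By condition $(1)$ and a union bound over these $q$ fixed sets,
\[
\Pr_{f_2\sim\mathcal D}\bigl[\exists j\le q:\ f_1(S_j)\neq f_2(S_j)\bigr]\le q\cdot n^{-\omega(1)} = n^{-\omega(1)}.
\]
On the complementary event, an induction over the queries shows that $\Alg$ run on $f_2$ sees exactly the same answers. It therefore issues the same queries and outputs the same $\pi$. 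Because the prefixes of $\pi$ were queried, we get $\val_{f_2}(\pi)=\val_{f_1}(\pi)\ge \OPT_1\ge \gamma\cdot \OPT_2$ by condition $(2)$. Hence, with probability $1-n^{-\omega(1)}>0$ over $f_2$, the algorithm's ratio on $f_2$ is at least $\gamma$. So some $f_2$ in the support defeats $\Alg$, contradicting $\rho=o(\gamma)$. The degenerate case $\OPT_2=0<\OPT_1$ is handled the same way: the output then has positive value on an instance whose optimum is $0$, so no finite ratio is achieved.

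\textbf{Randomized case.} A randomized algorithm is a distribution over deterministic algorithms, indexed by random coins $r$. For each fixed $r$, the argument above shows that the probability over $f_2$ of the event that $\Alg_r$ behaves differently on $f_2$ and on $f_1$ is $n^{-\omega(1)}$. Averaging over $r$, this probability is $n^{-\omega(1)}$ jointly over $(r,f_2)$. Hence there is a fixed $f_2$ in the support of $\mathcal D$ for which, with probability $1-n^{-\omega(1)}$ over the coins, $\Alg$ outputs a solution of value at least $\gamma\cdot\OPT_2$. This rules out ratio $o(\gamma)$ both with high probability and in expectation.

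\textbf{Main obstacle.} The delicate step is the union bound. Condition $(1)$ is only a per-set statement, while the algorithm's queries are adaptive. The fix is to fix the run on the deterministic $f_1$ \emph{first*}: this makes the query sequence independent of $f_2$, and the per-set bound then applies to each of those fixed sets. A second point needing care is relating the output's value under $f_1$ and under $f_2$; this is exactly what the normalization of querying the output's prefixes guarantees.
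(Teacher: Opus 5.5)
Your proof is correct. There is nothing to compare it against, because the paper does not prove this lemma: it imports it as a black box from Svitkina and Fleischer. Your argument is the standard indistinguishability proof behind that lemma:
\begin{itemize}
\item fix the run on $f_1$, so the adaptive query sequence becomes a fixed polynomial-size list of sets;
\item apply a union bound with condition $(1)$;
\item use condition $(2)$ on the common output;
\item average over the coins to handle randomized algorithms.
\end{itemize}

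Your normalization step, which queries every set needed to evaluate the output's objective, deserves explicit credit. It is exactly where a hypothesis that the statement leaves implicit enters: the objective value of a solution must be determined by polynomially many oracle values. For \MLOP this holds via the $n-1$ prefixes.

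This hypothesis cannot be dropped. Take a problem with a single feasible solution whose value is $\min_{S\subseteq N} f(S)$. Let $f_1\equiv 1$, and let $f_2$ equal $1$ everywhere except on one uniformly random hidden set $R\subseteq N$, where it equals $1/\gamma$. Then conditions $(1)$ and $(2)$ hold, yet the trivial algorithm solves the problem exactly. So restricting to problems like \MLOP, where your union bound can also cover the output's prefixes, is necessary rather than merely convenient.
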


\section{\texorpdfstring{$O(\sqrt{n/\ln{n}})$-Approximation Algorithm}{O(sqrt(n/log n))-Approximation Algorithm}}\label{sec:algorithm}
As mentioned in Section~\ref{sec:techniques}, at a high level, our algorithm is inspired by the classic recursive balanced cutting framework, which has been extensively used in the literature for various ordering problems, see, e.g., Leighton and Rao~\cite{leighton1999multicommodity} and the references therein.
However, two key differences distinguish our approach from the recursive balanced cutting framework:
$(1)$ the balanced cut $S$  is computed over the entire ground set $N$ regardless of the current $A$, where $S$ only needs to be balanced with respect to $A$; and
$(2)$ the original $f$ is used regardless of $A$.

Specifically, given $A\subseteq N$, let $\ind_A$ denote the weight function that assigns a weight of $1$ to elements of $A$ and a weight of $0$ to all other elements, and the algorithm splits $A$ as follows:
$(1)$ it computes an approximate \SBClong $S\subseteq N$ of the entire ground set $N$ with element weights $ w=\ind_A$ and the given $f$ (as well as constants $b$ and $b'$ as in Theorem~\ref{thm:balanced_cut_algorithm});
$(2)$ it places the elements of $S \cap A$ before the elements of $ A\setminus (S \cap A)$, i.e., the elements of $S \cap A$ are placed to the left of the elements of $A\setminus (S \cap A)$; and
$(3)$ it recurses on $S \cap A$ and $ A\setminus (S \cap A)$.
The initial recursive invocation of the algorithm is performed with $A=N$.

To simplify presentation, for two orderings $\pi_1$ of $ A\subseteq N$ and $\pi_2$ of $B\subseteq N$, where $ A\cap B=\varnothing$, denote by $\pi_1 \circ \pi_2$ the ordering of $ A\cup B$ obtained by concatenating $\pi_2$ after $\pi_1$, i.e., all elements of $A$ are placed before (to the left of) all elements of $B$ and the order between elements of $A$ is determined by $\pi_1$ and the order between elements of $B$ is determined by $ \pi_2$. 

\begin{algorithm}[t]
\caption{Recursive Balanced Global Cutting}
\label{alg:MLOP_Alg}
\begin{algorithmic}[1]
\Require $A \subseteq N$
\Ensure ordering $\pi_A$ of $A$

\If{$\lvert A\rvert = 1$}
    \State \Return the unique ordering of $A$
\Else
    \State $w \gets \ind_A$
    \State execute the algorithm of Theorem~\ref{thm:balanced_cut_algorithm} on $ (N,f,w)$ with $ b=\nicefrac{1}{4}$ and $ b'=\nicefrac{1}{8}$ to obtain $S\subseteq N$ \label{line:calling_SBC}
    \State $T \gets S \cap A$
    \State recursively invoke Algorithm~\ref{alg:MLOP_Alg} on $T$ to obtain $\pi_T$
    \State recursively invoke Algorithm~\ref{alg:MLOP_Alg} on $A\setminus T$ to obtain $\pi_{A\setminus T}$
    \State $\pi_A \gets \pi_T \circ \pi_{A\setminus T}$
    \State \Return $\pi_A$
\EndIf

\end{algorithmic}
\end{algorithm}

\paragraph*{Analysis Notations.} We introduce the following notations to simplify the analysis.
First, denote by $\mathcal{A}$ the collection of sets $A\subseteq N$ for which Algorithm~\ref{alg:MLOP_Alg} invokes the approximation algorithm of Theorem~\ref{thm:balanced_cut_algorithm} for \SBClong, i.e., the non-base recursive calls.
Second, for any $ A\in \mathcal{A}$, denote by $h(A)\subseteq N$ the set of
elements Algorithm~\ref{alg:MLOP_Alg} places strictly before $A$ in the final ordering, i.e., elements placed to the left of $A$ in the final ordering.
Third and last, for every $ A\in \mathcal{A}$, denote by $\Alg(A)\subseteq N$ the set of elements returned by the \SBClong algorithm of Theorem~\ref{thm:balanced_cut_algorithm} when executed on $ (N,f,w)$ with $w=\ind_A$, $ b=\nicefrac{1}{4}$, and $ b'=\nicefrac{1}{8}$ (line~\ref{line:calling_SBC} of Algorithm~\ref{alg:MLOP_Alg}).

Following the above notations, one can view the execution of Algorithm~\ref{alg:MLOP_Alg} as inducing a decomposition tree whose internal (non-leaf) nodes are $\mathcal{A}$, which has $n$ leaves each corresponding to a different element of $N$, and whose root is $N$.
Given an internal node $A\in \mathcal{A}$, $A$ has two children:
the left child is $S \cap A$ and the right child is $A\setminus (S\cap A)$ (recall that $S\subseteq N$ is the approximate \SBClong found by Algorithm~\ref{alg:MLOP_Alg} in line~\ref{line:calling_SBC}).
Moreover, for every $i\geq 0$, denote by $ \mathcal{A}_i$ the collection of all $ A\in \mathcal{A}$ that are at depth $i$ in the decomposition tree, and let $ D\triangleq \max \{ i\colon \mathcal{A}_i\neq \varnothing \}$ denote the maximum depth of an internal (non-leaf) node of the decomposition tree.

\paragraph*{Analysis Overview.}
The goal in analyzing Algorithm~\ref{alg:MLOP_Alg} is to prove Theorem~\ref{thm:upper_bound}.
To this end, two main lemmas are required.

The first lemma provides an upper bound on $ \val(\pi)$, where $ \pi$ is the final ordering Algorithm~\ref{alg:MLOP_Alg} outputs, using only the values of the approximate solutions to all instances of \SBClong computed throughout the entire recursive execution of  Algorithm~\ref{alg:MLOP_Alg}.
It should be emphasized that this upper bound does not contain values of sets besides the approximate balanced cut solutions, making it easily relatable to $\OPT$.
This is summarized by the following lemma. 
\begin{lemma}\label{lem:alg_val_upper_bound}
Let $ \pi$ be the ordering Algorithm~\ref{alg:MLOP_Alg} outputs. Then,
\[
\val(\pi)
\le
\sum_{A \in \mathcal{A}} \lvert A\rvert\cdot f\bigl(\Alg(A)\bigr).
\]
\end{lemma}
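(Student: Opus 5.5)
The plan is to rewrite $\val(\pi)$ as a sum over internal nodes of the decomposition tree. For each node, I express the corresponding prefix as a read-once union/intersection formula of the cuts computed along its root path. Submodularity then bounds the value of each such formula.

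\textbf{Step 1: prefixes correspond to internal nodes.} For $A\in\mathcal{A}$ write $S_A=\Alg(A)$ and $T_A=S_A\cap A$. I claim that
\[
A\mapsto h(A)\cup T_A
\]
is a bijection from $\mathcal{A}$ onto the prefixes $N_{\pi,1},\dots,N_{\pi,n-1}$. Injectivity holds because every prefix of length $1\le i\le n-1$ separates the leaves at a unique lowest node $A$. At that node the prefix equals $h(A)$ together with the left child $T_A$, and $T_A$ is nonempty and proper since $S_A$ is $\nicefrac{1}{16}$-balanced with respect to $\ind_A$. Since a full binary tree with $n$ leaves has $n-1$ internal nodes, the map is also surjective. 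Hence
\[
\val(\pi)=\sum_{A\in\mathcal{A}} f\bigl(h(A)\cup T_A\bigr).
\]

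\textbf{Step 2: a read-once formula for each prefix.} Fix $A$ and its root path $N=B_0\supset B_1\supset\dots\supset B_k=A$. Let $S_j=\Alg(B_j)$, $H_j=h(B_j)$ and $U_j=H_j\cup B_j$, and define $g_j(Y)=H_j\cup(Y\cap U_j)$. The prefix at $A$ is then $g_k(S_k)$, because $Y\cap B_k$ and $Y\cap U_k$ differ only inside $H_k$.

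The key verification is the recursion for $g_j$.
\begin{itemize}
\item If $B_{j+1}$ is the left child, then $H_{j+1}=H_j$ and $U_{j+1}=g_j(S_j)$. A direct check gives $g_{j+1}(Y)=g_j(S_j\cap Y)$.
\item If $B_{j+1}$ is the right child, then $H_{j+1}=g_j(S_j)$ and $U_{j+1}=U_j$. This gives $g_{j+1}(Y)=g_j(S_j\cup Y)$.
\end{itemize}
Since $g_0$ is the identity ($H_0=\varnothing$, $U_0=N$), unrolling yields
\[
h(A)\cup T_A=S_0\,\square_0\,\bigl(S_1\,\square_1\,(\cdots(S_{k-1}\,\square_{k-1}\,S_k)\cdots)\bigr),
\]
where each $\square_j\in\{\cap,\cup\}$. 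This formula is read-once: each $S_j$ appears exactly once.

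\textbf{Step 3: bounding read-once formulas.} For any $X,Y$, submodularity and non-negativity give $f(X\cup Y)\le f(X)+f(Y)-f(X\cap Y)\le f(X)+f(Y)$, and symmetrically $f(X\cap Y)\le f(X)+f(Y)$. Structural induction on the formula then gives
\[
f\bigl(h(A)\cup T_A\bigr)\le\sum_{j=0}^{k}f(S_j)=\sum_{B\in\mathcal{A},\,B\supseteq A}f\bigl(\Alg(B)\bigr).
\]

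\textbf{Step 4: summing.} Summing the Step 3 bound over $A\in\mathcal{A}$ and exchanging the order of summation, each $B$ contributes $f(\Alg(B))$ once for every internal node in its subtree. The subtree of $B$ has $|B|$ leaves and hence $|B|-1\le|B|$ internal nodes, which gives the lemma.

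\textbf{Main obstacle.} The hardest step is Step 2: identifying the right invariant, namely the maps $g_j$ together with the identity $H_j\cup(Y\cap B_j)=H_j\cup(Y\cap U_j)$, and checking the two child cases carefully.
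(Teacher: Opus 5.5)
Your proposal is correct and is essentially the paper's proof. Your maps $g_j(Y)=h(B_j)\cup(Y\cap B_j)$, with the recursions $g_{j+1}(Y)=g_j(S_j\cap Y)$ and $g_{j+1}(Y)=g_j(S_j\cup Y)$, are exactly the paper's identity $T_A(Z)=(Z\cap A)\cup h(A)$ and its inductive step $T_A(Z)=T_P(\Alg(P)\cap Z)$ or $T_P(\Alg(P)\cup Z)$; the bound $f(X\cup Y),f(X\cap Y)\le f(X)+f(Y)$ and the $\lvert B\rvert-1$ double count are also the same. One small nit in Step 1: the lowest-node argument is what shows every prefix is attained, so it gives surjectivity, and counting then gives injectivity rather than the reverse, though your conclusion is unaffected.
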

We note that Lemma~\ref{lem:alg_val_upper_bound} captures the bulk of the proof of Theorem~\ref{thm:upper_bound}.
Moreover, its correctness heavily relies on proving that every prefix $ N_{\pi,i}$ corresponds to a different and unique $A\in \mathcal{A}$ and that $N_{\pi,i}$ can be written exactly as a read-once formula using only approximate balanced cut solutions of nodes in the unique path in the decomposition tree between $A$ and the root $N$.

The second structural lemma provides a lower bound on $\OPT$ using optimal values of disjoint instances of \SBClong, and is independent of Algorithm~\ref{alg:MLOP_Alg}.
This is summarized by the following lemma.
\begin{lemma}\label{lem:structural-decomposition}
Let $X_1,\dots,X_\ell \subseteq N$ be pairwise disjoint subsets, and assume that $\lvert X_i\rvert\geq 2$ for all $i\in [\ell]$.
For each $i \in [\ell]$, let $S_i^\star \subseteq N$ denote an optimal solution to the \SBClong instance $(N,f,w)$ with $w = \mathbf{1}_{X_i}$ and $ b=\nicefrac{1}{4}$.
Then,
\[
\sum_{i =1}^{\ell} \lvert X_i\rvert \cdot f(S_i^\star) \leq c\cdot \OPT,
\]
for some absolute constant $c>0$.
\end{lemma}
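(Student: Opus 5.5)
Let $\pi^\star$ be an optimal ordering, so $\OPT=\sum_{j=1}^{n-1} f(N_{\pi^\star,j})$. The plan is to bound each term $\lvert X_i\rvert\cdot f(S_i^\star)$ by a sum of prefix values $f(N_{\pi^\star,j})$ over a window of indices $J_i\subseteq[n-1]$ with $\lvert J_i\rvert=\Omega(\lvert X_i\rvert)$. We then show that the windows for different $i$ overlap only a bounded number of times per index $j$.

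\textbf{Step 1 (one instance).} Fix $i$ and put $m=\lvert X_i\rvert\geq 2$. Walk along $\pi^\star$ and let $t_i(k)$ be the first position $j$ with $\lvert N_{\pi^\star,j}\cap X_i\rvert\geq k$. Every prefix $N_{\pi^\star,j}$ with $t_i(\lceil m/4\rceil)\le j< t_i(\lfloor 3m/4\rfloor+1)$ satisfies
\[
\tfrac14 m\le \lvert N_{\pi^\star,j}\cap X_i\rvert\le \tfrac34 m .
\]
Hence each such prefix is a feasible solution of the \SBClong instance $(N,f,\ind_{X_i})$ with $b=\nicefrac14$, and $f(S_i^\star)\le f(N_{\pi^\star,j})$ for all of them. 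Two cases need a small check:
\begin{itemize}
\item Small $m$, e.g.\ $m=2,3$: balancedness then forces exactly the right count, which exists since $0<\lceil m/4\rceil\le\lfloor 3m/4\rfloor<m$ for $m\ge2$.
\item The last prefix $N$ is excluded, but $j<n$ holds automatically because $\lvert N_{\pi^\star,j}\cap X_i\rvert<m$.
\end{itemize}

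\textbf{Step 2 (choosing windows).} Instead of all balanced prefixes, we use only those in the ``middle element'' window. For each $k\in K_i=\{\lceil m/4\rceil,\dots,\lfloor 3m/4\rfloor\}$, pick the single index $j=t_i(k)$, i.e.\ the prefix ending exactly at the $k$-th element of $X_i$. These indices are distinct for distinct $k$, and $\lvert K_i\rvert\ge m/c'$ for an absolute constant $c'$ (this needs $m\geq 2$; for $m=2$ we get $\lvert K_i\rvert=1\ge m/2$). Therefore
\[
m\cdot f(S_i^\star)\le c'\sum_{k\in K_i} f\bigl(N_{\pi^\star,t_i(k)}\bigr).
\]

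\textbf{Step 3 (summing).} Each index $j=t_i(k)$ is the position of an element of $X_i$. Since the $X_i$ are pairwise disjoint, every position $j$ is charged by at most one pair $(i,k)$. Summing over $i$ gives $\sum_i \lvert X_i\rvert f(S_i^\star)\le c'\,\OPT$, with constant about $c'\le 4$.

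The main obstacle is this disjointness of charging. Charging all balanced prefixes naively would let one prefix be counted by many $X_i$. Anchoring each charge at the position of an element of $X_i$ removes the overlap. The only delicate part is the counting $\lvert K_i\rvert=\Omega(\lvert X_i\rvert)$ for small $\lvert X_i\rvert$, which is handled by the hypothesis $\lvert X_i\rvert\ge2$.
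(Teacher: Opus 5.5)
Your proposal is correct and is essentially the paper's proof. The paper likewise charges $f(S_i^\star)$ to the prefixes $C_{i,j}$ of $\pi^\star$ ending at the $j$th element of $X_i$ for $j\in\{\lceil \lvert X_i\rvert/4\rceil,\dots,\lfloor 3\lvert X_i\rvert/4\rfloor\}$, uses their feasibility for the $b=\nicefrac{1}{4}$ instance, and uses disjointness of the $X_i$ to make all charged prefixes distinct (obtaining $c=6$). Your remark that only the middle range is charged, so every charged prefix is $N_{\pi^\star,j}$ with $j\le n-1$, is slightly more careful than the paper's inequality summing over all $j\le\lvert X_i\rvert$, which could formally include $f(N)$.
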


Theorem~\ref{thm:upper_bound} follows from combining Lemmas~\ref{lem:alg_val_upper_bound} and~\ref{lem:structural-decomposition}.
\begin{proof}[Proof of Theorem~\ref{thm:upper_bound}]
We show that Algorithm~\ref{alg:MLOP_Alg} satisfies the guarantee of Theorem~\ref{thm:upper_bound}.
Let $\pi$ denote the final ordering Algorithm~\ref{alg:MLOP_Alg} outputs.

Lemma~\ref{lem:alg_val_upper_bound} provides the following upper bound on $ \val(\pi)$:
\begin{align}
\val(\pi)
\le
\sum_{A \in \mathcal{A}} \lvert A\rvert\cdot f\bigl(\Alg(A)\bigr).  \label{UB-ineq1}
\end{align}
Since $\mathcal{A} = \biguplus_{i=0}^{D} \mathcal{A}_i$, where $ \biguplus$ denotes disjoint union, the right-hand side of~\eqref{UB-ineq1} can be rewritten by layers of the decomposition tree as follows:
\begin{align}
\sum_{A\in\mathcal{A}}\lvert A\rvert\cdot f\bigl(\Alg(A)\bigr)
= \sum_{i=0}^{D} \sum_{A\in\mathcal{A}_i} \lvert A\rvert\cdot f\bigl(\Alg(A)\bigr). \label{UB-eq1}
\end{align}

The contribution of each layer $ \mathcal{A}_i$ to the right-hand side of~\eqref{UB-eq1} can be upper bounded by:
\begin{align}
 \sum_{A \in \mathcal{A}_i} \lvert A\rvert\cdot f\bigl(\Alg(A)\bigr)
\le
O\!\left(\sqrt{(\beta^i\,n)/(\ln(\beta^i n))}\right)
\cdot \OPT, \label{UB-layer}   
\end{align}
where $ \beta=\nicefrac{15}{16}$.
To prove~\eqref{UB-layer}
assume that $\mathcal{A}_i = \{A_1,\dots,A_\ell\}$.
For each $j \in [\ell]$, let $S_j^\star$ denote an optimal solution to \SBClong for the instance $(N,f,w)$ with $w = \ind_{A_j}$ and $b=\nicefrac{1}{4}$.
Since $\Alg(A_j)$ is obtained by applying Theorem~\ref{thm:balanced_cut_algorithm} on this instance with $b'=\nicefrac{1}{8}$:
\begin{align}
f\bigl(\Alg(A_j)\bigr)
\le
O\!\left(\sqrt{\lvert A_j\rvert/\ln \lvert A_j\rvert}\right) \cdot f(S_j^\star). \label{UB-layer2}
\end{align}
The function $x/\ln x$ is increasing for all $x \ge e$, and
$\lvert A_j\rvert \le \beta^i n$ for every $A_j \in \mathcal{A}_i$ (recall the balancing guarantee of Theorem~\ref{thm:balanced_cut_algorithm} and the choice of $b'=\nicefrac{1}{8}$).
Hence,the right-hand side of~\eqref{UB-layer2} can be further upper bounded as follows:
\begin{align}
f\bigl(\Alg(A_j)\bigr)
\le O\!\left(\sqrt{(\beta^i\,n)/\ln(\beta^i n)}\right)\cdot f(S_j^\star) .\label{UB-layer3}
\end{align}
Summing over~\eqref{UB-layer3} for all $ A\in \mathcal{A}_i$ and noting that $ A_1,\ldots,A_{\ell}$ are pairwise disjoint by construction, Lemma~\ref{lem:structural-decomposition} implies that:
\begin{align*}
    \sum_{A \in \mathcal{A}_i} \lvert A\rvert\cdot f\bigl(\Alg(A)\bigr)
&\le O\!\left(\sqrt{(\beta^i\,n)/\ln(\beta^i n)}\right)\sum_{j=1}^\ell \lvert A_j\rvert\cdot f(S_j^\star) 
\leq O\!\left(\sqrt{(\beta^i\,n)/\ln(\beta^i n)}\right) \cdot \OPT.
\end{align*}
Thus,~\eqref{UB-layer} holds.

Therefore,
\begin{align}
\val(\pi)
&\le
\sum_{i=0}^{D}
O\!\left(
\sqrt{(\beta^i\, n)/\ln\bigl(\beta^i\, n\bigr)}
\right)\cdot \OPT 
=
O\!\left(\sqrt{n/\ln n}\right)\cdot \OPT, \label{UB-eq2}
\end{align}
where: $(1)$ the inequality in~\eqref{UB-eq2}
follows from combining~\eqref{UB-ineq1} and~\eqref{UB-eq1} and applying~\eqref{UB-layer}  
to each layer $\mathcal{A}_i$;
and $(2)$ the equality in~\eqref{UB-eq2} follows since by definition, layer $\mathcal{A}_D$ contains a subset $A$ with $\lvert A\rvert\geq 2$ and $ \lvert A\rvert\leq \beta^Dn$, thus implying that $ D\leq \ln{(n/2)}/\ln{(1/\beta)}$.
Hence, standard arithmetic (see Lemma~\ref{lem:final_sum_upper_bound} in Appendix~\ref{app:results}) suffices to conclude the proof.
\end{proof}

\paragraph*{Proving Lemma~\ref{lem:alg_val_upper_bound}.}
Lemma~\ref{lem:alg_val_upper_bound} is the crux of the proof of Theorem~\ref{thm:upper_bound}, and its proof requires two preliminary steps.

First, the value of the solution is related to the decomposition tree.
Specifically, the value of the final ordering $\pi$ produced by Algorithm~\ref{alg:MLOP_Alg} is expressed  as a sum of terms associated with the non-base recursive invocations as follows:
\begin{align}
\val(\pi)
=
\sum_{A \in \mathcal{A}}
f\!\left( \left(\Alg(A) \cap A\right) \cup \hS(A) \right). \label{objective-decomposition}    
\end{align}
Recall that for any $ A\in \mathcal{A}$, $h(A)\subseteq N$ denotes the set of
elements that Algorithm~\ref{alg:MLOP_Alg} places strictly before $A$ in the final ordering $\pi$, i.e., elements placed to the left of $A$ in the final ordering $\pi$.

To see why~\eqref{objective-decomposition} is true, consider a non-base recursive call on a subset $A \in \mathcal{A}$.
Algorithm~\ref{alg:MLOP_Alg} invokes Theorem~\ref{thm:balanced_cut_algorithm} and obtains a subset $\Alg(A)\subseteq N$, and defines $T \leftarrow \Alg(A) \cap A$.
At this point, the algorithm commits to placing all elements of $T$ strictly
before, i.e., to the left of, all elements of $A \setminus T$ in the final ordering.
Subsequent recursive calls only determine the relative order within $T$ and
within $A \setminus T$, but cannot violate the above ordering constraint.
Thus, as a consequence of this commitment, in the final ordering $\pi$ there exists
a necessarily unique index $t$ satisfying: $ N_{\pi,t} = \hS(A) \cup T$.
Hence, the recursive call on $A$ corresponds to a unique prefix $N_{\pi,t}$.
Moreover, distinct recursive calls correspond to distinct prefixes, and the associated
term $f(N_{\pi,t})$ in the objective can be written as
$f\!\left((\Alg(A)\cap A)\cup \hS(A)\right)$.
Summing the corresponding terms over all $A \in \mathcal{A}$ yields~\eqref{objective-decomposition}.

Second, an additional recursive sequence of subsets associated with the execution of the algorithm is needed.
To simplify presentation, a mild abuse of notation is introduced, as $A$ is used to denote both an internal (non-leaf) node in the decomposition tree and the set associated with the corresponding recursive call.
For a node $A\in \mathcal{A}$, let $A^{(i)}$ denote the $i$\textsuperscript{th} ancestor of $A$ in the decomposition tree, where
$A^{(0)} \triangleq A$, $A^{(1)}$ is the immediate parent of $A$, and so on, until reaching the root.
Let $d(A)$ denote the depth of $A$ in the decomposition tree, i.e., the unique integer such that $A^{(d(A))}=N$ (recall that $N$ is the root of the decomposition tree).

For a node $A\in \mathcal{A}$ and a set $Z \subseteq N$, consider the following sequence of subsets whose definition is recursive.
The base case is $ E_0 \triangleq Z$ and for $i = 1,2,\dots,d(A)$ the recursion is:
\[
E_i \triangleq
\begin{cases}
\Alg\!\bigl(A^{(i)}\bigr)\ \cap\ E_{i-1},
& \text{if } A^{(i-1)} \text{ is a left child of } A^{(i)},\\[4pt]
\Alg\!\bigl(A^{(i)}\bigr)\ \cup\ E_{i-1},
& \text{if } A^{(i-1)} \text{ is a right child of } A^{(i)}.
\end{cases}
\]
The subset of interest, denoted by $ T_A(Z)$, is the last subset in the above sequence: $T_A(Z) \triangleq  E_{d(A)}$.

The following lemma captures the key structural property of the construction of $T_A(Z)$.
It proves that when instantiated with $Z=\Alg(A)$, the set $T_A(Z)$ coincides exactly with the contribution of the prefix that corresponds to the recursive call on $A$ to $\val(\pi)$~\eqref{objective-decomposition}, i.e., $ T_A(Z)=(\Alg(A)\cap A)\cup h(A)$.
This identity is crucial, as $T_A(Z)$ admits by definition a simple read-once formula, implying that $(\Alg(A)\cap A)\cup h(A) $ admits the same read-once formula.

\begin{lemma}
\label{clm:TA_general_identity}
For every $A \in \mathcal{A}$ and every $Z \subseteq N$: $ T_A(Z) = (Z \cap A)\ \cup\ \hS(A)$.
\end{lemma}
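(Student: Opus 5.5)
We prove the identity by induction on the depth $d(A)$, showing the stronger statement that for every $i\in\{0,\dots,d(A)\}$,
\[
E_i \;=\; \bigl(Z\cap A\bigr)\cup \hS(A)\cup\bigl(E_i\setminus A^{(i)}\bigr)\quad\text{with}\quad E_i\cap A^{(i)} = (Z\cap A)\cup\bigl(\hS(A)\cap A^{(i)}\bigr),
\]
or more conveniently: $E_i\cap A^{(i)} = (Z\cap A)\cup(\hS(A)\cap A^{(i)})$. Here $\hS(A)\cap A^{(i)}$ is exactly the set of elements of $A^{(i)}$ placed before $A$. At $i=d(A)$ we have $A^{(d(A))}=N$, so $E_{d(A)}\cap N = T_A(Z)$ and the claim reads $T_A(Z)=(Z\cap A)\cup \hS(A)$, as desired. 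Working inside $A^{(i)}$ is what makes the argument go through: the global cuts $\Alg(A^{(i)})$ may contain arbitrary elements outside $A^{(i)}$, but intersecting with $A^{(i)}$ neutralizes them.

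For the base case $i=0$, $E_0\cap A = Z\cap A$, and $\hS(A)\cap A=\varnothing$, so the claim holds. For the inductive step, let $P=A^{(i)}$ with children $L=\Alg(P)\cap P$ and $R=P\setminus L$, and let $C=A^{(i-1)}\in\{L,R\}$. By the ordering commitment of Algorithm~\ref{alg:MLOP_Alg}, the elements of $P$ placed before $A$ are as follows. If $C=L$, they are exactly the elements of $L$ placed before $A$, namely $\hS(A)\cap L$. If $C=R$, they are all of $L$ together with $\hS(A)\cap R$.

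In the left-child case, $E_i\cap P=\Alg(P)\cap E_{i-1}\cap P = E_{i-1}\cap L$, since $\Alg(P)\cap P=L$. Because $L=C$, the induction hypothesis gives $(Z\cap A)\cup(\hS(A)\cap L)$, which is the claim.

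In the right-child case, $E_i\cap P=(\Alg(P)\cap P)\cup(E_{i-1}\cap P)=L\cup(E_{i-1}\cap L)\cup(E_{i-1}\cap R)=L\cup(E_{i-1}\cap R)$. Applying the induction hypothesis to $E_{i-1}\cap R$ yields $L\cup(Z\cap A)\cup(\hS(A)\cap R)$, which equals $(Z\cap A)\cup(\hS(A)\cap P)$ by the description above.

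The main subtlety is the bookkeeping. First, one must use the intersection-restricted form of the invariant: $E_{i-1}$ itself can contain junk outside $A^{(i-1)}$, coming from $Z$ and from earlier cuts. Second, one must justify the description of $\hS(A)\cap P$ from the concatenation $\pi_P=\pi_L\circ\pi_R$, which follows directly by unrolling the recursion.
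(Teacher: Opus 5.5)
Your proof is correct, but it organizes the induction differently from the paper. The paper inducts on the depth of the node $A$ and keeps $Z$ universally quantified. Its key step is the composition identity $T_A(Z)=T_P(\Alg(P)\cap Z)$ for a left child, or $T_A(Z)=T_P(\Alg(P)\cup Z)$ for a right child, where $P$ is the parent. This peels off the first (innermost) operation of the chain and hands all remaining operations, those with the cuts of $P$'s ancestors, to the induction hypothesis for $P$, applied at the modified input. The global ordering facts $h(A)=h(P)$, respectively $h(A)=h(P)\cup(\Alg(P)\cap P)$, then finish the argument.

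You instead fix $A$ and $Z$, induct on the index $i$, and peel off the last (outermost) operation $\Alg(A^{(i)})$. The partial result $E_{i-1}$ has no clean closed form, since it carries junk outside $A^{(i-1)}$. You compensate by localizing the invariant to $E_i\cap A^{(i)}$ and using the local ordering facts $h(A)\cap P=h(A)\cap L$ or $h(A)\cap P=L\cup(h(A)\cap R)$.

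So the paper's strengthening is the quantification over all $Z$, which is exactly why the lemma is stated that way; yours is the restriction to $A^{(i)}$. Your version makes explicit why elements of the global cuts outside the current node are harmless, at the cost of slightly more bookkeeping. The paper's version is shorter and obtains the localization for free from the term $Z'\cap P$ in the hypothesis for $P$.

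One slip: the first equation in your opening display, $E_i=(Z\cap A)\cup h(A)\cup(E_i\setminus A^{(i)})$, fails in general for $i<d(A)$. At $i=0$ it would force $h(A)\subseteq Z$. It should have $h(A)\cap A^{(i)}$ in place of $h(A)$, or simply be dropped, since your argument only uses the intersection form.
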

\begin{proof}
We prove the lemma by induction on the depth $d(A)$ of node $A$ in the decomposition tree.

First, consider the base case, i.e., $A=N$ is the root of the decomposition tree with $d(A)=0$.
By definition $T_N(Z)=E_0=Z$.
Moreover, $h(N)=\varnothing$ implying that $ (Z\cap N)\cup h(N)=Z$.
Thus, the lemma holds for the base case.

Second, assume the lemma holds for all nodes in the decomposition tree of depth at most $i$ and let $A$ be a node of depth $i+1$.
Denote the parent of $A$ by $P$.
Consider two cases depending on whether $A$ is a left or right child of $P$ in the decomposition tree.

Assume $A$ is a left child of $P$, i.e., $ A=\Alg(P)\cap P$ by the definition of Algorithm~\ref{alg:MLOP_Alg}.
Observe that in this case the definitions of $T_A(\cdot)$ and $ T_P(\cdot)$ imply that:
\begin{align}
  T_A(Z)=T_P(\Alg(P)\cap Z).  \label{UB-eq4}
\end{align}
Applying the induction hypothesis to $P$ yields that: $ T_P(\Alg(P)\cap Z)=((\Alg(P)\cap Z)\cap P)\cup h(P)$.
Thus,
\begin{align}
   T_P(\Alg(P)\cap Z) =(A\cap Z)\cup h(P)=(A\cap Z)\cup h(A),\label{UB-eq5}
\end{align}
where:
$(1)$ the first equality follows since $ A=\Alg(P)\cap P$; and
$(2)$ the second equality follows since $ h(A)=h(P)$ (recalling that $A$ is the left child of $P$).
Combining~\eqref{UB-eq4} and~\eqref{UB-eq5} concludes the proof of the inductive step in the case where $A$ is a left child.

Assume $A$ is a right child of $P$, i.e., $A=P\setminus (\Alg(P)\cap P)$ by the definition of Algorithm~\ref{alg:MLOP_Alg}.
Observe that in this case the definitions of $ T_A(\cdot)$ and $T_P(\cdot)$ imply that:
\begin{align}
   T_A(Z)=T_P(\Alg(P)\cup Z). \label{UB-eq6}
\end{align}
Applying the induction hypothesis to $P$ yields that: $ T_P(\Alg(P)\cup Z)=((\Alg(P)\cup Z)\cap P)\cup h(P)$.
Hence,
\begin{align}
    T_P(\Alg(P)\cup Z) & = ((\Alg(P)\cap P)\cup (Z\cap P))\cup h(P) \label{UB-eq7}\\
    & = ((\Alg(P)\cap P)\cup (Z\cap A))\cup h(P) \label{UB-eq8} \\
    & = (Z\cap A)\cup h(A), \label{UB-eq9}
\end{align}
where:
$(1)$ equality~\eqref{UB-eq7} follows from standard boolean arithmetics, i.e., $ (\Alg(P)\cup Z)\cap P = (\Alg(P)\cap P)\cup (Z\cap P)$;
$(2)$ equality~\eqref{UB-eq8} follows since $ A=P\setminus (\Alg(P)\cap P)$, implying that $ (\Alg(P)\cap P)\cup (Z\cap P) = (\Alg(P)\cap P)\cup (Z\cap A)$; and
$(3)$ equality~\eqref{UB-eq9} follows since $ h(A)=h(P)\cup (\Alg(P)\cap P)$, as the left child of $P$, namely $ \Alg(P)\cap P$, precedes the right child of $P$, namely $A$, in the final ordering.
Combining~\eqref{UB-eq6} and~\eqref{UB-eq9} concludes the proof of the inductive step in the case where $A$ is a right child.
\end{proof}

Equipped with Lemma~\ref{clm:TA_general_identity}, Lemma~\ref{lem:alg_val_upper_bound} can now be proved.
\begin{proof}[Proof of Lemma~\ref{lem:alg_val_upper_bound}]
Fix \(A\in \mathcal{A}\) and consider its term \(f\!\left( \left(\Alg(A) \cap A\right) \cup \hS(A) \right)\) in~\eqref{objective-decomposition}.
Applying Lemma~\ref{clm:TA_general_identity} with $Z = \Alg(A)$ yields:
\begin{align}
f\!\left( \left(\Alg(A) \cap A\right) \cup \hS(A) \right) 
= f\bigl(T_A(\Alg(A))\bigr) 
\le \sum_{i=0}^{d(A)} f\bigl(\Alg(A^{(i)})\bigr). \label{UB-ineq5}
\end{align}
The inequality in~\eqref{UB-ineq5} follows from submodularity and non-negativity of $f$ since both imply that $ f(S\cup T)\leq f(S)+f(T)$ and $ f(S\cap T)\leq f(S)+f(T)$ for every $ S,T\subseteq N$.
Specifically, the recursive definition of $T_A(\Alg(A)) $ implies that $T_A(\Alg(A)) $ is obtained via successive union and intersection operations with $ \Alg(A^{(0)}), \Alg(A^{(1)}),\ldots,\Alg(A^{(d(A))})$, where each $ \Alg(A^{(i)})$ appears exactly once in this sequence of operations.

Therefore, plugging~\eqref{UB-ineq5} into~\eqref{objective-decomposition} provides:
\begin{align}
\val(\pi)
\le
\sum_{A\in \mathcal{A}} \sum_{i=0}^{d(A)} f\bigl(\Alg(A^{(i)})\bigr).\label{UB-ineq6}
\end{align}
Observe that in the double sum~\eqref{UB-ineq6}, each term \(f(\Alg(A))\) is counted once for every node
\(B\in\mathcal{A}\) such that \(A\) is an ancestor of \(B\) in the decomposition tree,
i.e., for every node contained in the subtree rooted at \(A\).
Since Algorithm \ref{alg:MLOP_Alg} partitions subsets until reaching singletons, the subtree rooted at \(A\)
contains exactly \(|A|\) leaves.
As the decomposition tree is a full binary tree and \(\mathcal{A}\) contains only non-leaf nodes,
this subtree therefore contains exactly \(\lvert A\rvert-1\) internal nodes.
Hence, each term \(f(\Alg(A))\) appears exactly \(\lvert A\rvert-1\) times in the sum.
Thus,
\[
\val(\pi)
\le
\sum_{A\in\mathcal{A}} (\lvert A\rvert -1)\cdot f\bigl(\Alg(A)\bigr).
\]
The proof is concluded by observing that $ \lvert A\rvert-1\leq \lvert A\rvert$.
\end{proof}

\paragraph*{Proving Lemma~\ref{lem:structural-decomposition}.}
The proof of the lower bound Lemma~\ref{lem:structural-decomposition} provides boils down to the observation that one can choose sufficiently many distinct solutions to the relevant \SBClong instances that correspond to distinct prefixes of an optimal ordering for \MLOP.
\begin{proof}[Proof of Lemma~\ref{lem:structural-decomposition}]
Let $\pi^\star$ be an optimal ordering of $N$, i.e., $ \val(\pi^{\star})=\OPT$.
Recall that:
\begin{align}
\val(\pi^{\star})
\;=\;
\OPT
\;=\;
\sum_{t=1}^{n-1} f\!\left(N_{\pi^\star,t}\right), \label{UB-eq3}
\end{align}
where $N_{\pi^\star,t}$ denotes the prefix of $\pi^{\star}$ consisting of its first
$t$ elements: $N_{\pi^\star,t}=\{ u\in N\colon \pi^{\star}(u)\leq t\} $.

Fix $X_i$ for some $i \in [\ell]$, and for every $j = 1,\dots,\lvert X_i\rvert$, let $C_{i,j} \subseteq N$ denote the prefix of $\pi^\star$
that contains all elements appearing in $\pi^\star$ no later than the $j$\textsuperscript{th}
element of $X_i$, where the elements of $X_i$ are ordered according to their
appearance order in $\pi^\star$.
Thus, if the $j$\textsuperscript{th} element of $X_i$ with respect to the order $\pi^{\star}$ is $u$, and $ \pi^{\star}(u)=k$, then $ C_{i,j}=N_{\pi^{\star},k}$.

Let $\mathcal{C}_i \;=\; \{ C_{i,1}, C_{i,2}, \dots, C_{i,\lvert X_i\rvert} \} $.
Note that each subset $C_{i,j}$ is a distinct prefix of $\pi^\star$ and thus
corresponds to a distinct cut in the right-hand side of~\eqref{UB-eq3}.
Hence,
\begin{align} \sum_{j=1}^{\lvert X_i\rvert} f(C_{i,j}) \geq \sum_{j=\left\lceil \lvert X_i\rvert/4 \right\rceil}^{\left\lfloor 3\lvert X_i\rvert/4 \right\rfloor} f(C_{i,j}) \geq \sum_{j=\left\lceil \lvert X_i\rvert/4 \right\rceil}^{\left\lfloor 3\lvert X_i\rvert/4 \right\rfloor} f(S_i^\star) \geq \frac{\lvert X_i\rvert}{6}\cdot f(S_i^\star).\label{UB-ineq3}
\end{align}
In the above:
$(1)$ the first inequality holds since $f$ is nonnegative;
$(2)$ the second inequality holds since, for every $j \in \{\lceil \lvert X_i\rvert/4 \rceil,\dots,\lfloor 3\lvert X_i\rvert/4 \rfloor\}$, the subset $C_{i,j}$ contains exactly $j$ elements of $X_i$ by construction, and therefore is a feasible solution to the \SBClong instance defined by $ (N,f,w)$ with $w=\mathbf{1}_{X_i}$ and $b=\nicefrac{1}{4}$ (implying that $C_{i,j}$ is a feasible solution for the same instance for which $S_i^{\star}$ is an optimal solution and thus $f(S_i^\star) \le f(C_{i,j})$); and
$(3)$ the third and last inequality holds since $ \left\lfloor 3\lvert X_i\rvert/4 \right\rfloor
-
\left\lceil \lvert X_i\rvert/4 \right\rceil
+ 1\geq \lvert X_i\rvert/6$ whenever $\lvert X_i\rvert\geq 2$.

Since the subsets $X_1,\dots,X_\ell$ are pairwise disjoint, the collections
$\mathcal{C}_1,\dots,\mathcal{C}_\ell$ consist of distinct prefixes of $\pi^\star$.
Therefore,
\begin{align}
\OPT
\; = \;
\sum_{t=1}^{n-1} f(N_{\pi^\star,t})
\; \geq \;
\sum_{i=1}^{\ell}\sum_{j=1}^{\lvert X_i\rvert} f(C_{i,j}).
\label{UB-ineq4}
\end{align}
Combining~\eqref{UB-ineq4} with~\eqref{UB-ineq3} concludes the proof with $c=6$.
\end{proof}


\begin{remark*}
It is worth noting that if particular classes of submodular functions admit an approximation algorithm for the corresponding balanced cut problem with a better guarantee than that of Theorem~\ref{thm:balanced_cut_algorithm}, then our framework of recursive balanced cutting using global cuts, as appearing in Algorithm~\ref{alg:MLOP_Alg}, yields correspondingly improved guarantees.

Moreover, the aforementioned framework using global cuts continues to apply even in settings where no natural local restriction of the objective function to recursive subinstances is available.
\end{remark*}

\section{Tight Information-Theoretic Hardness}\label{sec:hardness}
Given an ordering $\pi$ of $N$, denote by $R_{\pi,i} \triangleq N \setminus N_{\pi,i}= \{ u\in N\colon \pi(u)>i\} $ the suffix of length $n-i$ of $\pi$.
Let $\delta > 0$ be a parameter to be chosen later and let $\pi$ be a uniformly random ordering of $N$.
Define the following two set functions
$f_1, f_2 : 2^N \to \mathbb{R}$, where $f_1$ is deterministic and $f_2$ is random, since it depends on $\pi$:
\begin{align}
f_1(S)
&\triangleq \min\left\{
\lvert S\rvert/2,
|\bar{S}|/2
\right\}, \label{f1}\\
f_2(S)
&\triangleq \min\left\{
f_1(S),
\min \left\{\bigl(i/2 + \delta\bigr) +\lvert S \cap R_{\pi,i}\rvert - \lvert S\rvert/2
\;\colon\;
i = 1,\dots,n-1
\right\}\right\}.\label{f2}
\end{align}
Intuitively, hiding a random ordering $\pi$ through its suffixes allows $f_2$ to include, for each prefix $N_{\pi,i}$ appearing in the \MLOP objective, a term with a significantly low value. 

\begin{remark*}
The construction above is inspired by the indistinguishability constructions of Svitkina and Fleischer~\cite{svitkina2011submodular} for \SBClong and the subsequent hardness construction of~\cite{iwata2012approximating}, which yields a hardness of $2$ for \MLOP.
In particular, the deterministic function $f_1$ is identical to the one used in these earlier constructions, while the randomized function $f_2$ extends the previous approach from hiding only a constant number of random sets to hiding an entire random ordering through all of its suffixes.
\end{remark*}


First, we establish a general lemma that provides a sufficient condition for submodularity that applies to a broad class of set functions, including those used in our hardness construction.
\begin{lemma}\label{lem:g_submodular_general_form}
Let $U_m \subseteq U_{m-1} \subseteq \cdots \subseteq U_0 \subseteq N$ be a nested family of subsets of $N$,
and let $\alpha_0,\alpha_1,\ldots,\alpha_m \in \mathbb{R}$.
Define a set function $g\colon2^N \to \mathbb{R}$ as follows:
\[
g(S) \triangleq \min_{0 \le i \le m}\left\{ \alpha_i + \lvert S \cap U_i\rvert \right\}.
\]
Then, $g$ is submodular.
\end{lemma}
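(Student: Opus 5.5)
We prove the statement by induction on $m$, using Lemma~\ref{lem:lovasz} to show that adding one more term to the minimum keeps the function submodular.

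For $0\le k\le m$ define
\[
g_k(S)\triangleq \min_{0\le i\le k}\{\alpha_i+\lvert S\cap U_i\rvert\},
\]
so that $g=g_m$.

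\textbf{Base case.} The function $g_0(S)=\alpha_0+\lvert S\cap U_0\rvert$ is modular, and hence submodular.

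\textbf{Inductive step.} Suppose $g_{k-1}$ is submodular. Then $g_k=\min\{g_{k-1},h_k\}$, where $h_k(S)\triangleq \alpha_k+\lvert S\cap U_k\rvert$ is modular and therefore submodular. To invoke Lemma~\ref{lem:lovasz}, it suffices to show that $g_{k-1}-h_k$ is monotone non-decreasing.

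Take $i\le k-1$. The nesting $U_k\subseteq U_i$ gives
\[
\bigl(\alpha_i+\lvert S\cap U_i\rvert\bigr)-h_k(S)=\alpha_i-\alpha_k+\lvert S\cap (U_i\setminus U_k)\rvert .
\]
Each such difference is non-decreasing in $S$. Subtracting the common term $h_k(S)$ commutes with the minimum, so
\[
g_{k-1}(S)-h_k(S)=\min_{0\le i\le k-1}\bigl\{\alpha_i-\alpha_k+\lvert S\cap(U_i\setminus U_k)\rvert\bigr\}.
\]
This is a minimum of non-decreasing functions, and is therefore non-decreasing. Lemma~\ref{lem:lovasz} now gives that $g_k$ is submodular, which completes the induction.

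The only point needing care is adding the terms in the right order. We add the sets from largest ($U_0$) to smallest ($U_m$), so that each new set is contained in all previously added ones. This containment is exactly what makes each difference $\lvert S\cap U_i\rvert-\lvert S\cap U_k\rvert$ equal to a cardinality count, and hence monotone. Beyond this, the argument is routine.
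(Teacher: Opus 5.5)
Your proof is correct and follows essentially the same route as the paper: induction on the number of terms, applying Lemma~\ref{lem:lovasz} to the partial minimum and the new modular term. In both arguments the nesting $U_k\subseteq U_i$ makes the difference a pointwise minimum of non-decreasing functions $\alpha_i-\alpha_k+\lvert S\cap(U_i\setminus U_k)\rvert$.
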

\begin{proof}
For every $i\in \{ 0,\ldots,m\}$, define $ g_i\colon 2^N\rightarrow \mathbb{R}$ as follows: $g_i(S)\triangleq \alpha_i + \lvert S\cap U_i\rvert$. Consider the following two observations:
\begin{enumerate}[(i)]
    \item For every $ i\in \{ 0,\ldots,m\}$ the function $ g_i$ is submodular: this is true since $g_i$ is modular and thus it is also submodular. \label{observation1}
    \item For every $ i<j$ the function $ (g_i-g_j)(S)\triangleq g_i(S)-g_j(S)$ is monotone non-decreasing: this is true since $ (g_i-g_j)(S)=(\alpha_i-\alpha _j) + \lvert S\cap (U_i\setminus U_j)\rvert$ as $ U_j\subseteq U_i$. \label{observation2}
\end{enumerate}
We prove that $g(S)$ is submodular by induction on $m$.
The base case is when $m=0$, implying that: $g=g_0$.
Since observation~\ref{observation1} above implies that $g_0$ is submodular, the base case holds.

Focusing on the inductive case of $m>0$, consider the following two functions: $ \min _{0\leq i\leq m-1}\{ g_i\}$ and $g_m$.
The induction hypothesis implies that $ \min _{0\leq i\leq m-1}\{ g_i\}$ is submodular, and observation~\ref{observation1} implies that $g_m$ is submodular.
Moreover, one can note that $ \min _{0\leq i\leq m-1}\{ g_i\} - g_m$ is monotone non-decreasing.
The reason for the latter is that:
\begin{align}
    \min_{0 \leq i\leq m-1}\{ g_i\}-g_m =  \min_{0\leq i\leq m-1}\{ g_i -g_m\}, \label{inductive-step}
\end{align}
and the right-hand side of~\eqref{inductive-step} is a point-wise minimum of monotone non-decreasing functions (recalling that observation~\ref{observation2} implies that $ g_i-g_m$ is monotone non-decreasing for every $ i<m$).
Therefore, the right-hand side of~\eqref{inductive-step} is a monotone non-decreasing function on its own since it is the point-wise minimum of monotone non-decreasing functions.

All the conditions of Lemma~\ref{lem:lovasz} are satisfied for the two functions $ \min _{0\leq i\leq m-1}\{g_i \}$ and $g_m$, implying that $ \min \{ \min _{0\leq i\leq m-1}\{ g_i\},g_m\}=\min _{0\leq i\leq m}\{ g_i\}$ is a submodular function.
\end{proof}

The following lemma proves that $f_1$ and $f_2$, regardless of the outcome of the random choice $\pi$, are both non-negative and submodular (as required).

\begin{lemma}\label{lem:functions_are_submodular_nonnegative}
For every $\delta>0$ and every ordering $\pi$ of $N$, $f_1$ and $f_2$ are non-negative and submodular.
\end{lemma}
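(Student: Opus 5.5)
Both functions will be written in the form of Lemma~\ref{lem:g_submodular_general_form}, which gives submodularity; non-negativity will be checked directly.

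For $f_1$, write $\lvert \bar S\rvert/2 = n/2 - \lvert S\rvert/2$. Then $f_1(S) = -\lvert S\rvert/2 + \min\{\lvert S\rvert, n/2\}$. The function $\min\{\lvert S\rvert,n/2\} = \min\{0+\lvert S\cap N\rvert,\; n/2 + \lvert S\cap\varnothing\rvert\}$ has the form of Lemma~\ref{lem:g_submodular_general_form}, with the nested family $U_0=N\supseteq U_1=\varnothing$, so it is submodular. Adding the modular function $-\lvert S\rvert/2$ preserves submodularity, so $f_1$ is submodular. Non-negativity of $f_1$ is immediate.

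For $f_2$, pull out the common term $-\lvert S\rvert/2$ in the same way:
\[
f_2(S) = -\lvert S\rvert/2 + \min\Bigl\{ \lvert S\cap N\rvert,\; n/2+\lvert S\cap\varnothing\rvert,\; \min_{1\le i\le n-1}\bigl\{(i/2+\delta)+\lvert S\cap R_{\pi,i}\rvert\bigr\}\Bigr\}.
\]
The suffixes are nested, with $N\supseteq R_{\pi,1}\supseteq R_{\pi,2}\supseteq\cdots\supseteq R_{\pi,n-1}\supseteq\varnothing$. So we order the terms with $U_0=N$ ($\alpha_0=0$), then $U_i=R_{\pi,i}$ ($\alpha_i=i/2+\delta$) for $i=1,\dots,n-1$, and finally $U_n=\varnothing$ ($\alpha_n=n/2$). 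Lemma~\ref{lem:g_submodular_general_form} only requires the family to be nested and places no condition on the $\alpha_i$. Hence the inner minimum is submodular, and subtracting the modular $\lvert S\rvert/2$ keeps $f_2$ submodular. The one subtlety is the bookkeeping: we must confirm that $f_1$'s two terms fit into the same nested chain as the suffix terms, and they do, since $N$ sits at the top of the chain and $\varnothing$ at the bottom.

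Non-negativity of $f_2$ is the main obstacle. Since $f_1\ge 0$, it suffices to show that each suffix term satisfies $i/2+\delta+\lvert S\cap R_{\pi,i}\rvert-\lvert S\rvert/2\ge 0$ for every $i$. We split $S$ into its prefix part and suffix part, $\lvert S\rvert = \lvert S\cap N_{\pi,i}\rvert + \lvert S\cap R_{\pi,i}\rvert$. The term then equals
\[
i/2+\delta+\lvert S\cap R_{\pi,i}\rvert/2-\lvert S\cap N_{\pi,i}\rvert/2 .
\]
Because $\lvert S\cap N_{\pi,i}\rvert\le \lvert N_{\pi,i}\rvert = i$, this is at least $\delta>0$. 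So every term in the minimum is non-negative, hence so is $f_2$, and this holds for every ordering $\pi$ and every $\delta>0$.
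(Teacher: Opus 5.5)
Your proof is correct and matches the paper's argument essentially step for step. You use the same rewriting of $f_1$ and $f_2$ as a minimum minus $\lvert S\rvert/2$, feed the nested chain $N\supseteq R_{\pi,1}\supseteq\cdots\supseteq R_{\pi,n-1}\supseteq\varnothing$ into Lemma~\ref{lem:g_submodular_general_form}, and use the same prefix/suffix split $\lvert S\rvert=\lvert S\cap N_{\pi,i}\rvert+\lvert S\cap R_{\pi,i}\rvert$ for non-negativity. The only difference is cosmetic: you keep $\delta$ and get the bound $\ge\delta$, while the paper drops it and shows $\ge 0$.
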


\begin{proof}[Proof of Lemma~\ref{lem:functions_are_submodular_nonnegative}]
We start with $f_1$. By definition~\eqref{f1}, $f_1(S)$ is non-negative. For submodularity, note that $f_1$~\eqref{f1} can be equivalently written as:
\begin{align}
    f_1(S)
=\min\{|S|,n/2\}-|S|/2.\label{submodular-f1-1}
\end{align}
Both terms in the minimum of \eqref{submodular-f1-1} are of the form $\alpha_i + \lvert S \cap R_{\pi,i}\rvert$:
$(1)$ the term $|S|$ corresponds to $i=0$ with $R_{\pi,0}=N$ and $\alpha_0=0$;
$(2)$ the term $n/2$ corresponds to $i=n$ with $R_{\pi,n}=\varnothing$ and $\alpha_n=n/2$.
Since $\varnothing=R_{\pi,n}\subseteq R_{\pi,0} = N$ forms a nested chain, all conditions of Lemma~\ref{lem:g_submodular_general_form} are satisfied.
This yields that the minimum in~\eqref{submodular-f1-1} is a submodular function.
Since $|S|/2$ is modular, subtracting it preserves submodularity, and hence $f_1$ is submodular.

Let us now focus on $f_2$.
Fix any $\delta>0$ and any ordering $\pi$ of $N$.
First, we prove that $f_2$ is non-negative.
Since $f_1$ is non-negative, it suffices to show that for every $i\in \{ 1,\ldots,n-1\}$:
\begin{align}
\left(i/2 + \delta\right) +\lvert S \cap R_{\pi,i}\rvert - |S|/2 \ge 0. \label{submodular-f2-1}
\end{align}
Fixing $i$ and substituting $|S|$ with $\lvert S\cap R_{\pi,i}\rvert + \lvert S\cap N_{\pi,i}\rvert$ in~\eqref{submodular-f2-1} yields:
\begin{align}
\left(i/2 + \delta\right) +\lvert S \cap R_{\pi,i}\rvert - \lvert S\rvert/2 = 
\left(i/2 + \delta\right) +\lvert S \cap R_{\pi,i}\rvert/2
- \lvert S\cap N_{\pi,i}\rvert/2 \geq i/2 - \lvert S\cap N_{\pi,i}\rvert/2.\label{submodular-f2-2}
\end{align}
The right-hand side of~\eqref{submodular-f2-2} is non-negative since $\lvert S\cap N_{\pi,i}\rvert \le \lvert N_{\pi,i}\rvert = i$.

Second, we prove that $f_2$ is submodular.
To this end, note that $f_2$~\eqref{f2} can be equivalently written as:
\begin{align}
f_2(S) &= \min\left\{
|S|,n/2,
\min \left\{\left(i/2 + \delta\right) +|S \cap R_{\pi,i}|\;\colon \;
i = 1,\dots,n-1
\right\}\right\} - |S|/2. \label{submodular-f2-3}
\end{align}
Each of the $n+1$ terms in the minimum of~\eqref{submodular-f2-3} is of the form $ \alpha _i+|S\cap R_{\pi,i}|$:
$(1)$ the term $|S|$ corresponds to $i=0$ with $ R_{\pi,0}=N$ and $\alpha_0=0$;
$(2)$ the term $ n/2$ corresponds to $i=n$ with $ R_{\pi,n}=\varnothing$ and $ \alpha _n=n/2$; and
$(3)$ the remainder of the terms correspond to $i\in \{ 1,\ldots,n-1\}$ with $ R_{\pi,i}$ and $ \alpha_i=i/2+\delta$.
Noting that $ \varnothing=R_{\pi,n}\subseteq R_{\pi,n-1}\subseteq \ldots \subseteq R_{\pi,1}\subseteq R_{\pi,0}=N$ forms a nested chain, all conditions of Lemma~\ref{lem:g_submodular_general_form} are satisfied.
This yields that the minimum in~\eqref{submodular-f2-3} is a submodular function.
Since $|S|/2$ is modular, subtracting it preserves submodularity, and hence $f_2$ is submodular. 
\end{proof}

The following lemma shows that for a suitable choice of $\delta$, $f_1$ and $f_2$ satisfy the probabilistic requirement of the indistinguishability lemma, i.e., Lemma~\ref{lem:general_Indistinguishability_def}.
In what follows, recall that $f_1$~\eqref{f1} is deterministic whereas $ f_2$~\eqref{f2} is random and its randomness is determined by a uniform random ordering $\pi$ of $N$.

\begin{lemma}\label{lem:functions_satisfy_cond}
Assume that $\delta = \omega(\sqrt{n \ln n})$.
Then, for every $ S\subseteq N$: $ \Pr_{\pi}[f_1(S)\neq f_2(S)]\leq n^{-\omega(1)}$.
\end{lemma}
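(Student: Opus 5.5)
Fix $S\subseteq N$ with $s=\lvert S\rvert$. Since $f_2=\min\{f_1,\cdot\}$, we have $f_1(S)\neq f_2(S)$ exactly when some $i\in\{1,\dots,n-1\}$ satisfies
\[
\bigl(i/2+\delta\bigr)+\lvert S\cap R_{\pi,i}\rvert-s/2<\min\{s/2,(n-s)/2\}.
\]
The plan is to show that each of these $n-1$ events has probability $n^{-\omega(1)}$ and then take a union bound over $i$; the union bound loses only a factor $n$, which is absorbed.

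Fix $i$ and let $X_i=\lvert S\cap R_{\pi,i}\rvert$. Because $\pi$ is uniform, $R_{\pi,i}$ is a uniformly random subset of $N$ of size $n-i$. Hence $X_i$ is hypergeometric with mean $\mu_i=s(n-i)/n$. The key step is to check that the bad event forces $X_i$ to deviate from $\mu_i$ by more than $\delta$. I will compare against the two terms of the minimum separately; it suffices to show that, at $X_i=\mu_i$, the left-hand side is at least $\delta$ plus each of $s/2$ and $(n-s)/2$.

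For the first term, we have
\[
i/2+\mu_i-s/2-s/2=i/2-si/n=\tfrac{i}{2n}(n-2s).
\]
For the second term, we have
\[
i/2+\mu_i-s/2-(n-s)/2=i/2+s(n-i)/n-n/2=\tfrac{(n-i)(2s-n)}{2n}.
\]
These two quantities cannot both be negative, since one has sign $n-2s$ and the other sign $2s-n$. So at $X_i=\mu_i$, the left-hand side minus $\min\{s/2,(n-s)/2\}$ equals the larger of the two deviations from the individual terms, which is at least $0$. The bad event therefore requires $X_i<\mu_i-\delta$.

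Hoeffding's inequality applies to the hypergeometric distribution, because sampling without replacement concentrates at least as well as sampling with replacement. It gives
\[
\Pr[X_i\le\mu_i-\delta]\le\exp\bigl(-2\delta^2/(n-i)\bigr)\le\exp\bigl(-2\delta^2/n\bigr).
\]
With $\delta=\omega(\sqrt{n\ln n})$ this is $\exp(-\omega(\ln n))=n^{-\omega(1)}$. Multiplying by $n-1$ for the union bound keeps the total at $n^{-\omega(1)}$.

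The main obstacle is the sign bookkeeping in the step showing that the minimum of the two thresholds is dominated at the mean. The clean way through it is the observation above, that the two differences have opposite signs. Once that is verified, the concentration step is routine.
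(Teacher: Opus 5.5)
Your proof is correct. Its skeleton is the same as the paper's: a union bound over $i$, then Hoeffding's inequality for the hypergeometric variable $\lvert S\cap R_{\pi,i}\rvert$, giving $\exp(-2\delta^2/(n-i))\le\exp(-2\delta^2/n)$. Where you differ is in handling an arbitrary $S$.

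The paper first reduces to $\lvert S\rvert=n/2$ by a monotonicity argument. For $\lvert S\rvert\ge n/2$ the bad event is $\lvert S\cap R_{\pi,i}\rvert< n/2-i/2-\delta$, whose probability can only grow as elements are removed from $S$. For $\lvert S\rvert\le n/2$ it is $\lvert S\cap N_{\pi,i}\rvert> i/2+\delta$, whose probability can only grow as elements are added. At $\lvert S\rvert=n/2$ the mean is $(n-i)/2$ and the threshold sits exactly $\delta$ below it.

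You instead show directly, for every $s$, that the threshold lies at least $\delta$ below the mean $s(n-i)/n$. The key is your observation that $\tfrac{i(n-2s)}{2n}$ and $\tfrac{(n-i)(2s-n)}{2n}$ have opposite signs.

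The paper's route explains conceptually why balanced sets are the worst case. Yours is uniform in $s$, and it avoids a small parity wrinkle in the paper. For odd $n$ there is no set of size exactly $n/2$, so the reduction really lands on sizes $\lfloor n/2\rfloor$ and $\lceil n/2\rceil$. Each of these needs a separate, if trivial, check.

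One wording fix: you say it suffices that the left-hand side be at least $\delta$ plus \emph{each} of $s/2$ and $(n-s)/2$. That is generally false when $s\neq n/2$, since one of your two differences is then negative. What you need, and what your computation actually establishes, is at least $\delta$ plus their minimum, i.e., plus \emph{one} of them.
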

\begin{proof}
Fix $S\subseteq N$. By the definition of $f_2$~\eqref{f2}:
\begin{align}
\Pr\nolimits_\pi[f_1(S)\neq f_2(S)]
=
\Pr\nolimits_\pi\left[\exists i\in\{ 1,\ldots,n-1\}: (i/2+ \delta) + \lvert S \cap R_{\pi,i}\rvert - \lvert S\rvert/2 < f_1(S)\right]. \nonumber 
\end{align}
For each $i\in\{ 1,\ldots,n-1\}$, let
$p_i(S) \triangleq \Pr_\pi[\bigl(i/2+ \delta\bigr) + \lvert S \cap R_{\pi,i}\rvert - \lvert S\rvert/2 < f_1(S)]$.
We note that proving, for all $i$, that $p_i(S) \le n^{-\omega(1)}$, suffices to conclude the proof.
The reason is that the union bound over all possible $i$ gives:
\(\Pr[f_1(S)\neq f_2(S)] \le \sum _{i=1}^{n-1}p_i(S)\leq (n-1)\cdot n^{-\omega(1)}\leq   n^{-\omega(1)}\), as required.

We start by showing that $p_i(S)$ is maximized when $\lvert S\rvert = n/2$.
Suppose that $\lvert S\rvert\ge n/2$.
In this case $f_1(S) = |\bar{S}|/2$, and hence:
\begin{align}
p_i(S) = \Pr\nolimits_\pi\left[ (i/2+ \delta) + \lvert S \cap R_{\pi,i}\rvert - \lvert S\rvert/2 < \lvert\bar{S}\rvert/2\right] = \Pr\nolimits_\pi\left[\lvert S \cap R_{\pi,i}\rvert < n/2 - (i/2+ \delta) \right]. \label{indis-prob2}
\end{align}
Thus, removing an element from $S$ can only increase the probability on the right-hand side of~\eqref{indis-prob2}.
Suppose that $\lvert S\rvert \leq n/2$. In this case, $f_1(S) = |S|/2$, and hence:
\begin{align}
p_i(S) = \Pr\nolimits_\pi\left[ (i/2+ \delta) + \lvert S \cap R_{\pi,i}\rvert - \lvert S\rvert/2 < \lvert S\rvert/2\right] = \Pr\nolimits_\pi\left[ i/2+ \delta <  \lvert S \cap N_{\pi,i}\rvert\right], \label{indis-prob3}
\end{align}
where the last equality follows since $\vert S\cap R_{\pi,i}\rvert + \lvert S\cap N_{\pi,i}\rvert = \lvert S\rvert$.
Thus, adding an element to $S$ can only increase the probability on the right-hand side of~\eqref{indis-prob3}.
Therefore, it suffices to consider sets $S$ of size $n/2$.
Assume henceforth that $\lvert S\rvert = n/2$.

Focusing on $p_i(S)$, note that $R_{\pi,i}$ is a uniformly random subset of $N$ of size $n-i$.
Hence, the quantity $|S\cap R_{\pi,i}|$ counts how many of these $n-i$ sampled
elements belong to $S$. As $\lvert S\rvert=n/2$, this corresponds to sampling without
replacement from a population consisting of $n/2$ ones (corresponding to
elements of $S$) and $n/2$ zeros (corresponding to elements of $\bar{S}$).
By Hoeffding's inequality for sampling
without replacement (see, e.g.,~\cite{hoeffding1963probability,bardenet2013concentration}),
\begin{align*}
    p_i(S) &= \Pr\nolimits_{\pi}\left[\lvert S\cap R_{\pi,i}\rvert - (n-i)/2 < -\delta\right]
    \leq \exp\left(-2\delta^2/(n-i)\right)
    \leq \exp\left(-2\delta^2/n\right)
    \leq n^{-\omega(1)},
\end{align*}
where the last inequality follows since $\delta^2/n = \omega(\ln n)$ (recall that $ \delta=\omega(\sqrt{n\ln{n}})$). This establishes the bound on $p_i(S)$ assumed above, and completes the proof.
\end{proof}

We are now ready to prove Theorem~\ref{thm:lower_bound}. 

\begin{proof}[Proof of Theorem~\ref{thm:lower_bound}]
Consider \MLOP with the two functions $f_1$ and $f_2$ defined earlier. Throughout this proof, by a slight abuse of notation, we use $f_2$ to denote both
the random function and an arbitrary fixed realization of it.

First, focus on $f_1$. Let $\OPT_1$ denote the optimal value of \MLOP with the function $f_1$.
For any ordering $\pi$ and any $i \in \{ 1,\ldots,n-1\}$:
$f_1(N_{\pi,i}) = \min\{ i/2, (n-i)/2 \}$,
which depends only on $i$.
Hence, the \MLOP objective with $f_1$ does not depend on the ordering $\pi$, yielding:
\begin{align}
\OPT_1
= \sum_{i=1}^{n-1} \min\left\{ i/2, (n-i)/2 \right\}
\ge n^2/16. \nonumber
\end{align}

Second, consider $f_2$. Fix an arbitrary realization of $f_2$, and let $\OPT_2$ denote
the optimal value of \MLOP with $f_2$. The ordering $\pi$ that defines $f_2$
therefore satisfies $\OPT_2\leq \sum _{i=1}^{n-1}f_2(N_{\pi,i})$.
Note that for every $ i\in \{ 1,\ldots,n-1\}$: $ f_2(N_{\pi,i})\leq \delta$.
The reason for the latter is that $N_{\pi,i}\cap R_{\pi,i}=\varnothing$ and $\lvert N_{\pi,i}\rvert = i$, and thus:
\begin{align}
f_2(N_{\pi,i})\leq \left(i/2+\delta\right)
+ \lvert N_{\pi,i}\cap R_{\pi,i}\rvert
- \lvert N_{\pi,i}\rvert/2
= \delta.\nonumber
\end{align}
Therefore, $ \OPT_2\leq \sum _{i=1}^{n-1}f_2(N_{\pi,i})\leq n\delta$.

We conclude that for all possible realizations of $f_2$:
\begin{align}
   \OPT_1 \geq n^2/16 = \gamma \cdot n\delta \geq \gamma \cdot \OPT_2, \label{optima}
\end{align}
where the equality follows from setting $ \gamma=n/(16\delta)$.

Choosing $\delta =\omega(\sqrt{n\ln{n}})$ implies that:
$(1)$ the indistinguishability requirement of Lemma~\ref{lem:general_Indistinguishability_def} is satisfied by Lemma~\ref{lem:functions_satisfy_cond} and our choice of $\delta$; and
$(2)$ the optima ratio requirement of Lemma~\ref{lem:general_Indistinguishability_def} is satisfied with  $\gamma=n/(16\delta)$~\eqref{optima} .
Therefore, the two requirements of Lemma~\ref{lem:general_Indistinguishability_def} are satisfied.
Thus, the above and Lemma~\ref{lem:functions_are_submodular_nonnegative}, which establishes that $f_1$ and $f_2$ are always non-negative and submodular, conclude the proof of the theorem since $ \gamma = o(\sqrt{n/\ln{n}})$.
\end{proof}

\section{Funding}
Evan Abboud and Roy Schwartz received funding from the European Union’s Horizon 2020 research and innovation program under
grant agreement no. 852870-ERC-SUBMODULAR and ISF grant 3264/25.

\bibliography{sources}

\appendix
\section{Missing Proofs: Algorithm~\ref{alg:MLOP_Alg}}
\label{app:results}

\begin{lemma}\label{lem:final_sum_upper_bound}
Let $\beta\in(0,1)$ be a fixed constant and let
$D \le \sfrac{\ln(n/2)}{\ln(1/\beta)}$. Then,
\[
\sum_{i=0}^{D} \sqrt{\frac{\beta^i\, n}{\ln(\beta^i\, n)}}
= O\!\left(\sqrt{\frac{n}{\ln n}}\right).
\]
\end{lemma}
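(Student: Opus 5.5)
We bound the sum by splitting it at the index where $\beta^i n$ drops below $\sqrt{n}$; in the upper range the denominator is still of order $\ln n$, and in the lower range the numerator is already of order $n^{1/4}$. Throughout, the constraint $D \le \ln(n/2)/\ln(1/\beta)$ guarantees $\beta^i n \ge 2$ for every $i \le D$. Hence $\ln(\beta^i n) \ge \ln 2 > 0$, every term is well defined, and each term is at most $\sqrt{\beta^i n/\ln 2}$.

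\emph{Upper range.} Let $I_1 = \{ i \le D : \beta^i n \ge \sqrt{n}\}$. For $i \in I_1$ we have $\ln(\beta^i n) \ge \tfrac12 \ln n$, so the $i$-th term is at most $\sqrt{2}\,\sqrt{n/\ln n}\,(\sqrt{\beta})^i$. Summing the geometric series gives
\[
\sum_{i\in I_1} \sqrt{\frac{\beta^i n}{\ln(\beta^i n)}} \le \frac{\sqrt{2}}{1-\sqrt{\beta}}\cdot \sqrt{\frac{n}{\ln n}},
\]
which is $O(\sqrt{n/\ln n})$ because $\beta$ is a fixed constant.

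\emph{Lower range.} Let $I_2 = \{ i \le D : \beta^i n < \sqrt{n}\}$. For $i \in I_2$ we use the crude bound $\ln(\beta^i n) \ge \ln 2$, so each term is at most $\sqrt{\beta^i n}/\sqrt{\ln 2}$. The values $\beta^i n$ for $i \in I_2$ are decreasing geometrically from below $\sqrt{n}$. Therefore
\[
\sum_{i\in I_2} \sqrt{\beta^i n} \le \frac{n^{1/4}}{1-\sqrt{\beta}}.
\]
This is $O(n^{1/4})$, which is $o(\sqrt{n/\ln n})$.

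Adding the two ranges proves the claimed bound, with a constant depending only on $\beta$ (for $\beta=\nicefrac{15}{16}$ it is an absolute constant). The only delicate point is that $\ln(\beta^i n)$ must stay bounded away from $0$ for all $i \le D$, and the stated bound on $D$ guarantees exactly this. For small $n$, e.g.\ $n \le 2$, the sum is empty or a single bounded term, and the claim holds trivially.
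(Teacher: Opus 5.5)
Your proof is correct and follows essentially the same argument as the paper. You split at the index where $\beta^i n$ drops below $\sqrt{n}$. Above that index you use $\ln(\beta^i n)\ge \tfrac12\ln n$ and a geometric series to get $O(\sqrt{n/\ln n})$. Below it you use $\ln(\beta^i n)\ge \ln 2$ and a geometric series of size $O(n^{1/4})$. The only difference is cosmetic: the paper names the split index explicitly as $I=\lfloor \ln n/(2\ln(1/\beta))\rfloor$, whereas you define the two index sets directly.
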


\begin{proof}
From the assumption on $D$, we have $\beta^D n \ge 2$, and since $0<\beta<1$, this implies
$\beta^i n \ge 2$ for all $0\le i\le D$, hence $\ln(\beta^i n)\ge \ln 2$.

Let
\[
I := \left\lfloor \frac{\ln n}{2\ln(1/\beta)} \right\rfloor. 
\]
Then $\beta^I n \ge \sqrt{n}$ and $\beta^{I+1} n \le \sqrt{n}$.

For $0\le i\le \min\{I,D\}$, we have $\ln(\beta^i n)\ge\ln(\beta^I n)\ge \tfrac12 \ln n$, and therefore
\[
\sqrt{\frac{\beta^i n}{\ln(\beta^i n)}}
\le
\sqrt{\frac{2n}{\ln n}}\,\bigl(\sqrt{\beta}\bigr)^i.
\]
Summing and using that $\beta$ is constant,
\[
\sum_{i=0}^{\min\{I,D\}} \sqrt{\frac{\beta^i n}{\ln(\beta^i n)}}
\le
\sqrt{\frac{2n}{\ln n}} \sum_{i=0}^{\infty} (\sqrt{\beta})^i
= O\!\left(\sqrt{\frac{n}{\ln n}}\right).
\]

If $D>I$, then for $i\ge I+1$ we have $\beta^i n \le \beta^{I+1} n\leq \sqrt{n}$ and $\ln(\beta^i n)\ge \ln 2$, hence
\[
\sqrt{\frac{\beta^i n}{\ln(\beta^i n)}}
\le
\sqrt{\frac{n}{\ln 2}}\,(\sqrt{\beta})^i.
\]
Thus,
\[
\sum_{i=I+1}^{D} \sqrt{\frac{\beta^i n}{\ln(\beta^i n)}}
\le
\sqrt{\frac{n}{\ln 2}} \sum_{i=I+1}^{\infty} (\sqrt{\beta})^i
= O\!\left(n^{1/4}\right)
= o\!\left(\sqrt{\frac{n}{\ln n}}\right).
\]
Combining the two parts yields the claim.
\end{proof}

\section[Submodular b-Balanced Cut: Improving Svitkina and Fleischer]%
{Submodular $b$-Balanced Cut: Improving Svitkina and Fleischer~\cite{svitkina2011submodular}}
\label{app:ssc_probability}
This appendix provides the technical details required for proving
Theorem~\ref{thm:balanced_cut_algorithm}, which provides an approximation algorithm
for the \SBClong\ (\SBC) problem.
We adapt the analysis of
Svitkina and Fleischer~\cite{svitkina2011submodular} so that the approximation
guarantee depends on the number of elements of positive weight, rather than the
size of the ground set.
Moreover, we note that essentially the same algorithm and analysis of~\cite{svitkina2011submodular}, albeit with a few minor changes, apply to obtain the improved result.

At the core of the algorithm is a \WSSClong (\WSSC) subroutine used by
the \SBC algorithm.
We show that, in the weighted setting, this subroutine admits an
$O(\sqrt{k/\ln k})$ guarantee, where
$k = |\{v\in V : w(v)>0\}|$.
\paragraph{Organization.}
We first present an algorithm for \WSSC that takes as input an upper bound $B$ on the
optimal value and a success probability $p$, and returns with a probability of at least $p$ a cut whose objective value is at most $O(\sqrt{k/\ln k})\cdot B$.
We then show how this guarantee yields an approximation algorithm for \WSSC.
Next, we present the \SBC\ algorithm and its analysis, completing the proof of
Theorem~\ref{thm:balanced_cut_algorithm}.
We conclude by discussing probability amplification for \SBC\ and explaining how
these randomized guarantees propagate to the final analysis of our \MLOP
algorithm, i.e., Algorithm~\ref{alg:MLOP_Alg}.

\paragraph{\WSSClong.}
In the \WSSC problem we are given a ground set $V$ of size $n$, a nonnegative submodular function $f:2^V\to\mathbb{R}_{\ge 0}$, and a nonnegative weight function $w:V\to\mathbb{R}_{\ge 0}$.
Define a demand between every unordered pair $\{u,v\}\subseteq V$ as follows:
$d_{u,v}\triangleq w(u)\cdot w(v)$.
The goal in \WSSC is to find a set $S\subseteq V$ minimizing the ratio between the cut
value and the total separated demand:
$f(S)/(w(S)\cdot w(\bar S))$.

Let $K \triangleq \{v \in V : w(v) > 0\}$ denote the support of the weight function $w$, and let $k \triangleq |K|$.
A description of the algorithm appears in Algorithm~\ref{alg:WSSC}.
There are two things to note.
First, the constant $c>0$ in the number of iterations is an absolute constant to be determined later by the analysis.
Second, Algorithm~\ref{alg:WSSC} requires a black-box algorithm for unconstrained submodular minimization, a problem that is known to be polynomial time solvable.
There are various algorithms achieving this, starting from the classic result of Gr\"{o}tschel, Lov\'{a}sz and Schrijver~\cite{GLS81}, and subsequent faster algorithms, e.g.,~\cite{IFF01,SCHRIJVER2000346}.

\begin{algorithm}[th]
\caption{\WSSClong}
\label{alg:WSSC}
\begin{algorithmic}[1]
\Require $V,f,w$, and parameters $B,p$
\Ensure $T\subseteq V$ (or \textsf{fail})

\State $\alpha \gets 4\sqrt{\frac{k}{\ln k}}\cdot B$

\For{$\frac{8k^3}{c}\ln\!\big(\frac{1}{1-p}\big)$ iterations}

    \State initialize $\tilde{w}(v)\gets 0$ for all $v\in V$
    \State sample $S\subseteq V$ by including each $v\in V$ independently with a probability of $\nicefrac{1}{2}$
    \State for all $u\in S$: $\tilde{w}(u) \gets \sum_{v\notin S}d_{u,v}$
    \State for all $u\notin S$: $\tilde{w}(u) \gets -\sum_{v\in S}d_{u,v}$
    \State compute a set $T\subseteq V$ minimizing the submodular function
$f(T)-\alpha\sum_{v\in T}\tilde w(v)$
    \If{$f(T)-\alpha\sum_{v\in T}\tilde{w}(v) < 0$}
        \State \Return $T$
    \EndIf
\EndFor
\State \Return \textsf{fail}
\end{algorithmic}
\end{algorithm}

\begin{lemma}\label{lem:wssc_val_bound}
Fix a set $S \subseteq V$ sampled in one iteration of
Algorithm~\ref{alg:WSSC}. If the algorithm finds a set
$T \subseteq V$ such that
$f(T) - \alpha \sum_{v \in T} \tilde{w}(v) < 0$,
then:
\[
\frac{f(T)}{w(T)\cdot w(\bar T)} < \alpha .
\]
\end{lemma}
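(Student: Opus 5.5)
The plan is to rewrite $\sum_{v\in T}\tilde w(v)$ in terms of the demands and show it is at most $w(T)\cdot w(\bar T)$. Then the hypothesis $f(T)<\alpha\sum_{v\in T}\tilde w(v)$ yields the claim after dividing by $w(T)\,w(\bar T)$.

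\textbf{Step 1: expand the sum.} By the definition of $\tilde w$ in Algorithm~\ref{alg:WSSC}, for $u\in S$ we have $\tilde w(u)=\sum_{v\notin S}d_{u,v}$, and for $u\notin S$ we have $\tilde w(u)=-\sum_{v\in S}d_{u,v}$. Hence
\[
\sum_{u\in T}\tilde w(u)=\sum_{u\in T\cap S}\;\sum_{v\in \bar S}d_{u,v}\;-\;\sum_{u\in T\setminus S}\;\sum_{v\in S}d_{u,v}.
\]
Split the inner sums of the first term according to whether $v\in T$. The pairs $u\in T\cap S$, $v\in T\cap\bar S$ give exactly $\sum_{u\in T\cap S,\,v\in T\setminus S}d_{u,v}$. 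The same collection of pairs (with roles swapped) appears in the second term as the pairs $u\in T\setminus S$, $v\in T\cap S$, so the two cancel.

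\textbf{Step 2: bound what remains.} After the cancellation,
\[
\sum_{u\in T}\tilde w(u)=\sum_{u\in T\cap S}\;\sum_{v\in \bar T\cap\bar S}d_{u,v}\;-\;\sum_{u\in T\setminus S}\;\sum_{v\in \bar T\cap S}d_{u,v}\;\le\;\sum_{u\in T}\sum_{v\in\bar T}d_{u,v}=w(T)\,w(\bar T).
\]
The inequality holds because $d_{u,v}=w(u)w(v)\ge 0$: we drop the negative term, and the remaining positive term is a sub-sum of all cross pairs between $T$ and $\bar T$.

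\textbf{Step 3: conclude.} The hypothesis gives $0\le f(T)<\alpha\sum_{v\in T}\tilde w(v)\le \alpha\, w(T)\,w(\bar T)$, using nonnegativity of $f$ for the first inequality. Therefore $w(T)\,w(\bar T)>0$, since $\alpha>0$ when $B>0$. Dividing by $w(T)\,w(\bar T)$ gives $f(T)/(w(T)\,w(\bar T))<\alpha$.

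The only delicate point is the degenerate case. If $B=0$, then $\alpha=0$ and the hypothesis reads $f(T)<0$, which is impossible; so $\alpha>0$ whenever $T$ is returned. Apart from this, the main work is the bookkeeping cancellation in Step 1.
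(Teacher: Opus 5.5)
Your proposal is correct and matches the paper's proof. The paper sets $a_1=w(T\cap S)$, $a_2=w(T\setminus S)$, $b_1=w(\bar T\cap S)$, $b_2=w(\bar T\setminus S)$ and writes $\sum_{v\in T}\tilde w(v)=a_1(a_2+b_2)-a_2(a_1+b_1)\le w(T)\,w(\bar T)$, which is your cancellation leaving $a_1b_2-a_2b_1$; it then concludes from non-negativity of $f$ exactly as you do, and your remark that $\alpha\ge 0$ is needed (with $\alpha=0$ making the hypothesis vacuous) is left implicit there.
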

\begin{proof}
Let $a_1 = w(T\cap S)$, $a_2 = w(T\setminus S)$, $b_1 = w(\bar T\cap S)$, and
$b_2 = w(\bar T\setminus S)$.
Then $w(K\setminus S)=a_2+b_2$ and $w(K\cap S)=a_1+b_1$.
By definition of Algorithm~\ref{alg:WSSC}:
\begin{align*}
\sum_{v\in T}\tilde{w}(v)
&=
\sum_{v\in T\cap S} w(v)\cdot w(K\setminus S)
\;-\;
\sum_{v\in T\setminus S} w(v)\cdot w(K\cap S) \nonumber\\
&= a_1(a_2+b_2) - a_2(a_1+b_1) \nonumber\\
&\le (a_1+a_2)(b_1+b_2) \nonumber\\
&= w(T)\cdot w(\bar T). 
\end{align*}
Using this inequality and the assumption of the lemma, we obtain:
\[
f(T) - \alpha\, w(T)\,w(\bar T)
\le
f(T) - \alpha\sum_{v\in T}\tilde{w}(v)
< 0.
\]
Since $f$ is non-negative, it follows that $w(T)\,w(\bar T)>0$.
Rearranging yields $\frac{f(T)}{w(T)\,w(\bar T)}<\alpha$.
\end{proof}

We now state a concentration bound that will be used in the analysis.

\begin{theorem}[Svitkina and Fleischer~\cite{svitkina2011submodular}]
\label{thm:svitkina_concentration_bound}
Suppose that $m$ elements are selected independently, with probability $0<q<1$ each.
Then for $0 \le \varepsilon < \frac{1-q}{q}$, the probability that exactly
$\left\lceil qm(1+\varepsilon)\right\rceil$ elements are selected is at least
$cq \cdot m^{-3/2}\cdot \exp\!\left(-\,\varepsilon^2\cdot \frac{qm}{1-q}\right)$, for some constant $c>0$.

\end{theorem}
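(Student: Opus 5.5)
The plan is to prove this statement, a local central-limit-type lower bound on a single binomial point probability, directly from the binomial probability mass function combined with Stirling's approximation. Let $X\sim\mathrm{Bin}(m,q)$ and $j\triangleq\lceil qm(1+\varepsilon)\rceil$. The condition $\varepsilon<(1-q)/q$ gives $qm(1+\varepsilon)<m$, so $j\le m$. We therefore need to lower bound
\[
\Pr[X=j]=\binom{m}{j}q^j(1-q)^{m-j}.
\]

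\textbf{Step 1 (entropy form).} Apply the standard Stirling-based bound $\binom{m}{j}\ge \frac{1}{\sqrt{8j(1-j/m)}}\exp\bigl(m H(j/m)\bigr)$ for $1\le j\le m-1$, where $H$ is the natural-log binary entropy. The edge case $j=m$ is handled separately: there $\Pr[X=m]=q^m$, and we compare it directly with the target bound. This yields
\[
\Pr[X=j]\ \ge\ \Omega\bigl(m^{-1/2}\bigr)\cdot \exp\bigl(-m\,D(x\,\|\,q)\bigr),
\]
with $x=j/m$ and $D$ the binary KL divergence. The polynomial loss here is at most $m^{-1/2}$. The stated bound allows the weaker $q\,m^{-3/2}$, which leaves slack to absorb the rounding in the ceiling and the degenerate cases.

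\textbf{Step 2 (KL bound).} Use the $\chi^2$ upper bound on KL divergence:
\[
D(x\|q)\le \frac{(x-q)^2}{q(1-q)}.
\]
Ignoring rounding, $x-q=q\varepsilon$, so
\[
m\,D(x\|q)\le \frac{\varepsilon^2 q m}{1-q},
\]
which is exactly the exponent in the statement. The ceiling shifts $x$ by at most $1/m$, which raises the exponent by at most
\[
O\!\left(\frac{q\varepsilon + 1/m}{q(1-q)}\right)+O\!\left(\frac{1}{m q(1-q)}\right).
\]

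\textbf{Step 3 (rounding, the main obstacle).} This additive error is harmless when $q(1-q)$ is bounded away from $0$, but it can blow up when $q$ is tiny or close to $1$. My first approach avoids expanding the error. Instead I compare the point probabilities at $j$ and at $j-1$ through the ratio
\[
\frac{\Pr[X=j]}{\Pr[X=j-1]}=\frac{(m-j+1)\,q}{j\,(1-q)}.
\]
This ratio is at least $q\cdot(1/m)$ in the worst case, which is exactly where the factor $q$ and the extra factor $m^{-1}$ in $cq\,m^{-3/2}$ come from. Concretely, I apply Steps 1--2 to the integer $j'=\lfloor qm(1+\varepsilon)\rfloor$, where $mD(j'/m\|q)\le \varepsilon^2qm/(1-q)$ holds because $j'/m$ lies between $q$ and $q(1+\varepsilon)$ and the $\chi^2$ bound is monotone in $|x-q|$. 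I then pay one ratio factor $\ge q/m$ to move to $j=j'+1$; if $j'=j$ nothing is paid.

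\textbf{Step 4 (edge cases).} The remaining edge cases are $j'=0$ (where $\Pr[X=0]=(1-q)^m\ge e^{-qm/(1-q)}$ must be compared with the exponent) and $j'=m$. I would check these directly, choosing $c$ small enough to cover them. I expect this bookkeeping of the extreme-$q$ and boundary cases to be the only delicate part.
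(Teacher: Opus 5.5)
The paper gives no proof of this statement; it is quoted from Svitkina and Fleischer, so your argument has to stand on its own. The skeleton is sound. The entropy bound in Step~1 gives $\Pr[X=i]\ge (2m)^{-1/2}e^{-mD(i/m\|q)}$ for $1\le i\le m-1$, using $x(1-x)\le 1/4$. The $\chi^2$ bound on $D$ is correct. The ratio bound $\frac{(m-i+1)q}{i(1-q)}\ge q/m$ is correct and explains the shape $q\,m^{-3/2}$.

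The gap is the key claim of Step~3 that $j'=\lfloor qm(1+\varepsilon)\rfloor$ satisfies $q\le j'/m$. This fails whenever $qm\notin\mathbb{Z}$ and $qm\varepsilon<\lceil qm\rceil-qm$. Already for $\varepsilon=0$ you have $j'=\lfloor qm\rfloor<qm$, and your asserted inequality becomes $mD(j'/m\|q)\le 0$, which is false. So the main case of Step~3 is not established as written.

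The repair needs a case split that your Step~4 does not identify.

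\emph{The case $j'<qm$.} Here $j'=\lfloor qm\rfloor$, $j=j'+1$ and $q-j'/m<1/m$, so the $\chi^2$ bound only gives $mD(j'/m\|q)<1/(mq(1-q))$. When $j'\ge 1$ and $j\le m-1$ you have $qm>1$ and $m(1-q)>1$. Hence $mq(1-q)>1/2$ and you lose at most a factor $e^{-2}$, which $c$ absorbs.

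\emph{The case $j'=0$.} Then $qm<1$, so for $m\ge 2$ you get $q<1/2$ and $(1-q)^m\ge e^{-qm/(1-q)}\ge e^{-2}$. Compare this with a constant, not with $\varepsilon^2qm/(1-q)$, since $\varepsilon$ may be $0$.

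\emph{The case $j'=m-1<qm$, i.e., $j=m$ with $m(1-q)<1$.} This is the genuinely delicate case, and routing through $j'$ breaks down irreparably there. Indeed $\Pr[X=m-1]=mq^{m-1}(1-q)\to 0$ as $q\to 1$, while the target stays of order $q\,m^{-3/2}$, so the ratio bound $q/m$ is far too lossy. You must bound $\Pr[X=m]=q^m\ge (1-1/m)^m\ge 1/4$ directly, and use $\Pr[X=1]=q$ when $m=1$.

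Your Step~1 remark about $j=m$ would cover this only if applied to the target index $j$. As written, Step~3 routes every case through $j'$, and Step~4 lists $j'=m$, which never occurs since $qm(1+\varepsilon)<m$. With these cases added, your argument proves the statement with, e.g., $c=e^{-2}/\sqrt{2}$.
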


Assume the instance is feasible, and let $U^\star\subseteq V$ be a solution of value strictly less than $B$,
that is, $f(U^\star)/D^\star < B$, where $D^\star = w(U^\star)\cdot w(V \setminus U^\star)$.
Let $m = |U^\star\cap K|$.

\begin{lemma}
\label{lem:wssc_iteration_success}
For each iteration of Algorithm~\ref{alg:WSSC},
\[
\Pr\nolimits_S\!\left[
\sum_{v\in U^\star}\tilde{w}(v)
\ \ge\
\frac{1}{4}\sqrt{\frac{\ln k}{k}}\cdot D^\star
\right]
\ \ge\
\frac{c}{8k^3}.
\]
\end{lemma}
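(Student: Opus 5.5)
The plan is to rewrite the random quantity $\sum_{v\in U^\star}\tilde w(v)$ in closed form, reduce the claim to two events about $S$ that are independent of each other, and bound the probability of each. The weighted setting is handled by conditioning on how many elements of $U^\star$ are sampled; this lets us apply Theorem~\ref{thm:svitkina_concentration_bound} to the count and then an averaging argument to the weight.

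\emph{Step 1 (closed form).} Write $W_1=w(U^\star)$ and $W_2=w(V\setminus U^\star)$, so $D^\star=W_1W_2$. Define $a_1=w(U^\star\cap S)$, $a_2=w(U^\star\setminus S)$, $b_1=w(\bar U^\star\cap S)$, $b_2=w(\bar U^\star\setminus S)$. The computation in the proof of Lemma~\ref{lem:wssc_val_bound}, applied with $T=U^\star$, gives
\[
\sum_{v\in U^\star}\tilde w(v)=a_1b_2-a_2b_1=W_1W_2\Bigl(1-\tfrac{a_2}{W_1}-\tfrac{b_1}{W_2}\Bigr),
\]
where the second equality uses $a_1=W_1-a_2$ and $b_2=W_2-b_1$. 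It therefore suffices to show that, with probability at least $c/(8k^3)$,
\[
\tfrac{a_2}{W_1}\le \tfrac12-\tfrac14\sqrt{\ln k/k}\quad\text{and}\quad \tfrac{b_1}{W_2}\le \tfrac12 .
\]
We may assume $D^\star>0$, since otherwise the claim is trivial.

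\emph{Step 2 (independence and the complement).} The first event depends only on the coins of $U^\star\cap K$ and the second only on the coins of $\bar U^\star\cap K$, so the two events are independent. For the second, $b_1$ and $W_2-b_1$ have the same distribution by the symmetry $S\leftrightarrow \bar S$ restricted to $\bar U^\star$. Hence $\Pr[b_1\le W_2/2]\ge 1/2$.

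\emph{Step 3 (the main obstacle: lower-tail anti-concentration of a weighted sum).} Let $J=|(U^\star\cap K)\setminus S|$ be the number of unsampled elements of $U^\star$ in the support. Conditioned on $J=j$, the set $(U^\star\cap K)\setminus S$ is a uniform $j$-subset of the $m$ elements of $U^\star\cap K$, so $\mathbb{E}[a_2\mid J=j]=(j/m)W_1$. Since $W_1-a_2\ge 0$, Markov's inequality applied to $W_1-a_2$ gives
\[
\Pr\bigl[a_2\le (j/m+\eta)W_1\mid J=j\bigr]\ \ge\ \frac{\eta}{1-j/m+\eta}\ \ge\ \eta .
\]
Choose $\eta=\tfrac18\sqrt{\ln k/k}$ and $j=\lceil \tfrac m2(1-\varepsilon)\rceil$ with $\varepsilon$ set so that $j/m+\eta\le \tfrac12-\tfrac14\sqrt{\ln k/k}$. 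Roughly, $\varepsilon\approx \tfrac34\sqrt{\ln k/k}+O(1/m)$, which uses $m\le k$. Theorem~\ref{thm:svitkina_concentration_bound} concerns upward deviations of the count, so I apply it to the complementary count, the number of sampled elements, with $q=\tfrac12$ and deviation $\varepsilon$. This gives
\[
\Pr[J=j]\ \ge\ \tfrac c2\, m^{-3/2}e^{-\varepsilon^2 m}\ \ge\ \tfrac c2\, k^{-3/2}e^{-O(\ln k)} .
\]
The constants $\tfrac14$ and $\tfrac18$ must be tuned so that the exponent stays at most $\tfrac12\ln k$, giving $\Pr[J=j]\gtrsim k^{-2}$. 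The rounding in $j$ must also be absorbed; this is delicate when $m$ is small. In that case I would instead use the event $J=0$ directly. It has probability $2^{-m}$, which is still polynomial in $k$ when $m=O(\ln k)$, and on it $a_2=0$.

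\emph{Step 4 (combine).} Multiplying the three factors gives a success probability of at least
\[
\Pr[J=j]\cdot\eta\cdot\tfrac12\ \gtrsim\ k^{-2}\sqrt{\ln k/k},
\]
which is at least $c/(8k^3)$ after adjusting the absolute constant $c$.

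I expect Step 3 to be the crux, specifically the bookkeeping needed to keep the exponent within $k^{-3}$ overall. The conditioning-plus-Markov trick is what makes the weighted case no harder than the unweighted one.
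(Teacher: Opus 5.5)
Your proof is correct, and it takes a different route from the paper. The paper never splits the two sides. With $\varepsilon=\sqrt{\ln k/k}$, it conditions on the event $E$ that at least $\frac m2(1+\varepsilon)$ elements of $U^\star\cap K$ are sampled, so $\Pr[E]\ge\frac c2k^{-5/2}$. It then uses exchangeability inside $U^\star\cap K$, together with the independent fair coins outside $U^\star$, to show that each separated demand pair contributes at least $\frac{\varepsilon}{2}d_{u,v}$ in conditional expectation. Hence $\mathbb{E}[\sum_{v\in U^\star}\tilde w(v)\mid E]\ge\frac{\varepsilon}{2}D^\star$, and Markov applied to the nonnegative $D^\star-\sum_{v\in U^\star}\tilde w(v)$ gives conditional probability at least $\varepsilon/4$. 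Your identity $\sum_{v\in U^\star}\tilde w(v)=D^\star(1-a_2/W_1-b_1/W_2)$ instead decouples the bilinear form into two independent one-sided conditions. The complement side then costs only a factor $\frac12$ via the $S\leftrightarrow\bar S$ symmetry, and all the anti-concentration and Markov work is confined to $U^\star$. The ingredients are the same. The paper's version is shorter; yours is more transparent and gives a better polynomial, about $k^{-41/16}\sqrt{\ln k}$ instead of $k^{-3}\sqrt{\ln k}$.

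Three small repairs are needed.

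\emph{Markov step.} In Step~3 the bound comes from Markov applied to $a_2\ge 0$ itself: $\Pr[a_2\ge(j/m+\eta)W_1\mid J=j]\le\frac{j/m}{j/m+\eta}$. So the conditional probability is at least $\frac{\eta}{j/m+\eta}\ge\eta$, using $j/m+\eta\le 1$. Markov on $W_1-a_2$ only upper-bounds $\Pr[a_2\le\cdot\,]$, which is the wrong direction. Your expression $\frac{\eta}{1-j/m+\eta}$ also falls below $\eta$ when $j<\eta m$; for example, at $j=0$ the true conditional probability is $1$.

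\emph{Rounding.} The rounding issue disappears if you take $j=\lfloor\frac m2(1-\varepsilon)\rfloor$. This means exactly $\lceil\frac m2(1+\varepsilon)\rceil$ sampled elements, which is precisely the event Theorem~\ref{thm:svitkina_concentration_bound} controls. Then $j/m\le\frac12(1-\varepsilon)$ for every $m\ge 1$, so $\varepsilon=\frac34\sqrt{\ln k/k}$ works exactly and the $J=0$ fallback is unnecessary.

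\emph{Constants.} No tuning is needed. With this $\varepsilon$ and $m\le k$, the exponential factor is at least $k^{-9/16}$. The product $\frac c2k^{-3/2}k^{-9/16}\cdot\frac18\sqrt{\ln k/k}\cdot\frac12$ is $\Omega(k^{-3})$ with room to spare; any small-$k$ slack can be absorbed into $c$, which the paper also leaves to be determined.
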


\begin{proof}
Let $\varepsilon \triangleq \sqrt{\ln k / k}$.
Define $E$ as the event that
$\lvert U^\star \cap K \cap S\rvert \ge \tfrac{m}{2}(1+\varepsilon)$,
where $S$ is the random set sampled in this iteration.
We bound the desired probability by conditioning on $E$:
\begin{align}
\Pr\nolimits_S\!\left[
\sum_{v\in U^\star}\tilde{w}(v)
\ge
\frac{\varepsilon}{4}\, D^\star
\right]
\ge
\Pr\nolimits_S\!\left[
\sum_{v\in U^\star}\tilde{w}(v)
\ge
\frac{\varepsilon}{4}\, D^\star
\ \Bigm|\ E
\right]
\cdot
\Pr\nolimits_S[E].
\label{eq:wssc_conditioning}
\end{align}
By Theorem~\ref{thm:svitkina_concentration_bound}, we have
$\Pr_S[E] \ge \frac{c}{2}\cdot k^{-5/2}$, using $m \le k$.

All probabilities and expectations in the remainder of the proof
are conditioned on the event $E$.
We now analyze the expected value of $\sum_{v\in U^\star}\tilde{w}(v)$.
By construction, $\sum_{v\in U^\star}\tilde{w}(v)$ can be written as a sum of signed demands $d_{u,v}=w(u)w(v)$ over all pairs $\{u,v\}\subseteq K$ that are separated by the cut $(S,\bar S)$ and have exactly one endpoint in $U^\star$, i.e., $ \sum_{v\in U^\star}\tilde{w}(v)=\sum _{v\in U^{\star},u\notin U^{\star}}(1_{\{ v\in S,u\notin S\}}d_{u,v}-1_{\{ v\notin S, u\in S\}}d_{u,v})$.
Fix a pair $\{u,v\}\subseteq K$ with exactly one endpoint in $U^\star$, and assume without loss of generality that $u\in U^\star$ and $v\in V\setminus U^\star$. Let $p_u \triangleq \Pr[u\in S ]$ and $p_v \triangleq \Pr[v\in S ]$. Since the event $E$ depends only on the random choices of vertices in $U^\star\cap K$, and $v\notin U^\star$, the event of $v$ being in $S$ is independent of
$E$, and therefore $p_v=1/2$. By symmetry among elements of $U^\star \cap K$, conditioning on the event $E$ implies that each element of $U^\star\cap K$ is included in $S$ with the same probability, equal to $p_u$. Therefore, $\mathbb{E}\!\left[\,|U^\star\cap K\cap S| \right]=m\cdot p_u$, and it follows from the event $E$ that $p_u \geq (1+\varepsilon)/2$.
We have:
\[
\Pr[u\in S \wedge v\notin S]
=
p_u(1-p_v)
\ge
\frac{1+\varepsilon}{4},
\]
\[
\Pr[u\notin S \wedge v\in S]
=
(1-p_u)p_v
\le
\frac{1-\varepsilon}{4}.
\]
Then, the expected contribution of this demand pair to
$\sum_{v\in U^\star}\tilde{w}(v)$ is:
\[
\Pr[u\in S \wedge v\notin S]\cdot d_{u,v}
\;+\;
\Pr[u\notin S \wedge v\in S]\cdot (-d_{u,v})
\;\ge\;
\frac{\varepsilon}{2}\cdot d_{u,v}.
\]
By linearity of expectation,
\[
\mathbb{E}\!\left[\sum_{v\in U^\star}\tilde{w}(v)\right]
\ge
\frac{\varepsilon}{2}\sum_{u\in U^\star\cap K}\sum_{v\in (V\setminus U^\star)\cap K} d_{u,v}
=
\frac{\varepsilon}{2}\, w(U^\star)\,w(V\setminus U^\star)
=
\frac{\varepsilon}{2}\,D^\star.
\]
We now apply Markov's inequality.
Define the random variable
$Y \triangleq D^\star - \sum_{v\in U^\star}\tilde{w}(v)$,
which is nonnegative (as in the proof of Lemma~\ref{lem:wssc_val_bound}). Hence, $\mathbb{E}[Y]
\le \left(1-\frac{\varepsilon}{2}\right)D^\star.$  By Markov's inequality,
\begin{align*}
\Pr\!\left[\sum_{v\in U^\star}\tilde{w}(v)\le \frac{\varepsilon}{4}D^\star\right]
&=
\Pr\!\left[Y \ge \left(1-\frac{\varepsilon}{4}\right)D^\star\right] \\
&\le
\frac{\mathbb{E}[Y]}{\left(1-\frac{\varepsilon}{4}\right)D^\star}\\
&\le
1-\frac{\varepsilon}{4}.
\end{align*}
It follows that:
\[
\Pr\!\left[\sum_{v\in U^\star}\tilde{w}(v)\ge \frac{\varepsilon}{4}D^\star\mid E\right]
\ge
\frac{\varepsilon}{4}.
\]
Combining this inequality with the bound $\Pr[E]\ge \frac{c}{2}\,k^{-5/2}$ and
substituting into~\eqref{eq:wssc_conditioning}, we obtain the claimed result.
\end{proof}

\begin{theorem}\label{thm:wssc_main}
For any feasible instance of the \WSSClong,
Algorithm~\ref{alg:WSSC} returns, with a probability of at least $p$,
a solution of cost at most $4\sqrt{k/\ln k}\cdot B$,
where $k = |\{v\in V : w(v)>0\}|$.
\end{theorem}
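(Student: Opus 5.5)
The argument combines Lemma~\ref{lem:wssc_iteration_success} with Lemma~\ref{lem:wssc_val_bound}, and then amplifies the per-iteration success probability over the prescribed number of independent iterations.

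Fix a solution $U^\star$ with $f(U^\star)/D^\star<B$, as assumed above; feasibility gives $D^\star>0$. Fix one iteration with sampled set $S$, and call it \emph{good} if
\[
\sum_{v\in U^\star}\tilde w(v)\ \ge\ \tfrac14\sqrt{\ln k/k}\cdot D^\star .
\]
In a good iteration, evaluate the objective minimized in that iteration at the candidate $T=U^\star$:
\[
f(U^\star)-\alpha\sum_{v\in U^\star}\tilde w(v)\ \le\ f(U^\star)-4\sqrt{k/\ln k}\,B\cdot\tfrac14\sqrt{\ln k/k}\,D^\star\ =\ f(U^\star)-B\,D^\star\ <\ 0 .
\]
Since the submodular minimization step returns a minimizer $T$, the value at $T$ is also negative. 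Hence the test in the algorithm succeeds and some $T$ is returned. Lemma~\ref{lem:wssc_val_bound} then gives $f(T)/(w(T)w(\bar T))<\alpha=4\sqrt{k/\ln k}\cdot B$, which is the claimed cost bound.

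It remains to show that, with probability at least $p$, the algorithm does not fail and the returned set comes from a qualifying iteration. Any returned $T$ satisfies the test inequality. Lemma~\ref{lem:wssc_val_bound} applies to every such $T$, regardless of whether that iteration was good. Therefore every returned set is a valid solution, and it suffices to bound the probability of returning \textsf{fail}.

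The algorithm fails only if no iteration is good. Iterations are independent, and by Lemma~\ref{lem:wssc_iteration_success} each is good with probability at least $c/(8k^3)$. Over $\frac{8k^3}{c}\ln\frac{1}{1-p}$ iterations,
\[
\Pr[\textsf{fail}]\ \le\ \left(1-\frac{c}{8k^3}\right)^{\frac{8k^3}{c}\ln\frac{1}{1-p}}\ \le\ \exp\!\left(-\ln\tfrac{1}{1-p}\right)\ =\ 1-p .
\]
So the algorithm succeeds with probability at least $p$.

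The only delicate point is that an earlier, non-good iteration may already return some $T$. This causes no problem, because Lemma~\ref{lem:wssc_val_bound} certifies any returned set. The remaining steps are routine: take the iteration count as an integer (rounding up only helps) and treat $c$ as the constant from Lemma~\ref{lem:wssc_iteration_success}.
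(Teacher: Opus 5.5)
Your proposal is correct and follows essentially the same route as the paper. Lemma~\ref{lem:wssc_iteration_success} yields a good iteration with probability at least $p$; in that iteration $U^\star$ witnesses a negative minimum, so a set is returned, and Lemma~\ref{lem:wssc_val_bound} bounds its ratio by $\alpha$. Your explicit observation that a set returned by an earlier, non-good iteration is also certified by Lemma~\ref{lem:wssc_val_bound}, together with the computation $(1-x)^{L/x}\le e^{-L}$, fills in details that the paper's proof leaves implicit.
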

\begin{proof}
By Lemma~\ref{lem:wssc_iteration_success}, in each iteration the event
$\sum_{v\in U^\star}\tilde{w}(v) \ge \frac{1}{4}\sqrt{\ln k/k}\cdot D^\star$
occurs with a probability of at least $c/(8k^3)$.
Since the algorithm performs $\frac{8k^3}{c}\ln\!\big(\frac{1}{1-p}\big)$ iterations,
this event occurs in at least one iteration with probability at least $p$.
Condition on an iteration in which it occurs, and let $T$ be a minimizer of
$f(X)-\alpha\sum_{v\in X}\tilde{w}(v)$ in that iteration. By optimality of $T$,
\begin{align}
f(T)-\alpha\sum_{v\in T}\tilde{w}(v)
&\le
f(U^\star)-\alpha\sum_{v\in U^\star}\tilde{w}(v)\nonumber\\
&\le
f(U^\star)-\left(4\sqrt{\tfrac{k}{\ln k}}\cdot B\right)\cdot
\left(\tfrac{1}{4}\sqrt{\tfrac{\ln k}{k}}\cdot D^\star\right)\nonumber\\
&=
f(U^\star)-B D^\star
<0.\nonumber
\end{align}
Therefore, the algorithm returns $T$, and by Lemma~\ref{lem:wssc_val_bound} we get
$\tfrac{f(T)}{w(T)\cdot w(\bar T)}<\alpha=4\sqrt{k/\ln k}\cdot B$.
\end{proof}

To obtain an approximation for \WSSC, we proceed by performing a brute force search in a logarithmic scale on the value $B$ (in~\cite{svitkina2011submodular} it is stated that a binary search is used).
We provide a brief overview of the details.

First, we compute a set $\tilde{S}\subseteq V$ that is an optimal solution to:
\begin{align}
    \min _{S\subseteq V\colon \varnothing\neq S\cap K\neq K} \left\{ f(S)\right\}. \label{search1}
\end{align}
We note that~\eqref{search1} is solvable in polynomial time since it reduces to $O(k^2)$ unconstrained submodular minimization problems (one can simply enumerate over all possible pairs of elements $u,v$ in $K$, 
and solve the unconstrained submodular minimization problem $\min\{g(S)\colon S\subseteq V\setminus \{ u,v\}\}$ over $V\setminus \{ u,v\}$, where $ g\colon 2^{V\setminus \{ u,v\}}\rightarrow \mathbb{R}_{\geq 0}$ and $ g(S)\triangleq f(S\cup\{ u\})$ is submodular, using, e.g.,~\cite{GLS81}).
Denote by $\tilde{B}\triangleq f(\tilde{S})$ the value of~\eqref{search1}.

Second, let $U^{\star}\subseteq V$ be an optimal solution to the given instance of \WSSC, i.e.,
\begin{align}
    \frac{f(U^{\star})}{w(U^{\star})\cdot w(V\setminus U^{\star})}=\OPT_{\text{\WSSC}}. \nonumber
\end{align}
We note that $U^{\star}$ is a feasible solution to~\eqref{search1}, since if $U^{\star}\cap K=K$ or $ U^{\star}\cap K=\varnothing$ then $w(U^{\star})\cdot w(V\setminus U^{\star})=0$ (contradicting the fact that $U^{\star}$ is an optimal solution).
Thus,
\begin{align}
   \tilde{B}=f(\tilde{S})\leq f(U^{\star}). \label{search2}
\end{align}
Third, denote by $w_{\min}\triangleq \min \{ w(u)\colon u\in K\}$, $w_{\max}\triangleq \max \{ w(u)\colon u\in K\}$, and $W\triangleq \sum _{u\in K}w(u)$.
Therefore, for every $ S\subseteq V$ such that $\varnothing\neq S\cap K\neq K$:
\begin{align}
   w_{\min}(W-w_{\min})\leq w(S)\cdot w(\bar S)\leq \frac{W^2}{4}, \label{search3}
\end{align}
since $  w(S)\cdot w(\bar S)=x(W-x)$ for some $ w_{\min}\leq x\leq W-w_{\min}$.

To perform the brute force search in a logarithmic scale on $B$, we need to provide a suitable range in which $\OPT_{\text{\WSSC}}$ belongs.
We start with an upper bound on the range:
\begin{align}
   \OPT_{\text{\WSSC}}= \frac{f(U^{\star})}{w(U^{\star})\cdot w(V\setminus U^{\star})}\leq \frac{f(\tilde{S})}{w(\tilde{S})\cdot w(V\setminus\tilde{S})}\leq \frac{\tilde{B}}{w_{\min}(W-w_{\min})}, \label{search4}
\end{align}
where the first inequality follows from the optimality of $U^{\star}$ and the second inequality follows from the left-hand side of~\eqref{search3}.
Focusing on the lower bound of the range:
\begin{align}
   \OPT_{\text{\WSSC}}= \frac{f(U^{\star})}{w(U^{\star})\cdot w(V\setminus U^{\star})} \geq \frac{\tilde{B}}{\frac{W^2}{4}}, \label{search5}
\end{align}
where the inequality follows from combining~\eqref{search2} with the right-hand side of~\eqref{search3}.
Hence, combining the upper bound~\eqref{search4} and the lower bound~\eqref{search5}, we perform the search in the range $ [\tilde{B}/(W^2/4),\tilde{B}/(w_{\min}(W-w_{\min}))]$.

Consider an interval $[\tilde{B}/(w_{\min}(W-w_{\min}))/2^{j+1},\tilde{B}/(w_{\min}(W-w_{\min}))/2^j]$ for $j=0,\ldots,L$, where $L=O(\ln{(W/w_{\min})})=O(\ln{(n\cdot w_{\max}/w_{\min}}))$ (note that $ W\leq n\cdot w_{\max}$) such that the entire range is covered.
For each such $j$, we invoke Algorithm~\ref{alg:WSSC} with $B=\tilde{B}/(w_{\min}(W-w_{\min}))/2^j$, setting the failure probability to be $ (1-p)/L$.
There must exist an index $j^{\star}$ for which $ \tilde{B}/(w_{\min}(W-w_{\min}))/2^{j^{\star}+1}\leq \OPT_{\text{\WSSC}}\leq \tilde{B}/(w_{\min}(W-w_{\min}))/2^{j^{\star}}$, and for the invocation corresponding to this $j^{\star}$, Theorem~\ref{thm:wssc_main} guarantees a solution whose cost is at most $ O(\sqrt{k/\ln{k}})\cdot \OPT_{\text{\WSSC}}$.
Taking a union bound of the failure probability over all $L$ invocations, implies that we succeed with an overall probability of at least $p$ and the running time of each invocation is polynomial in $n$ and an additional factor of $ O(\ln{(L/(1-p))})$.
We conclude since $ L=O(\ln{(n\cdot w_{\max}/w_{\min})})$ is polynomial in $n$ and the number of bits required to represent the weight function $w$.

The following corollary summarizes the guarantee for \WSSC.
\begin{corollary}\label{cor:wssc_approx}
There exists a polynomial-time randomized algorithm that, given a submodular
set function $f$ and a nonnegative weight function $w$ on $V$, where $\lvert V \rvert=n$, with a probability
of at least $p$, where $p$ satisfies that $ 1-p\geq \Omega(\text{exp}(-\text{poly}(n)))$, returns a cut $T \subseteq V$ satisfying:
\[
\frac{f(T)}{w(T)\cdot w(\bar T)}
\le
O\!\left(\sqrt{\frac{k}{\ln k}}\right)\cdot
\operatorname{OPT}_{\mathrm{WSSC}},
\]
where $k = |\{v\in V : w(v)>0\}|$.
\end{corollary}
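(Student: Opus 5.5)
The corollary is assembled from Theorem~\ref{thm:wssc_main} and the logarithmic-scale search described just before it. The argument has three parts: bracket $\OPT_{\text{\WSSC}}$ in an explicit range, invoke Algorithm~\ref{alg:WSSC} once per dyadic sub-interval of that range, and take a union bound over the failures.

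\textbf{Bracketing the optimum.} We first handle the degenerate case $k\le 1$. Here every $S$ has $w(S)\,w(\bar S)=0$, so the instance is infeasible. We treat this case trivially, or add a constant to $k$ in the bound.

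Otherwise we compute $\tilde S$ by solving~\eqref{search1}. This takes $O(k^2)$ unconstrained submodular minimizations, one for each pair $u,v\in K$: we force $u$ into the set and $v$ out of it. If $\tilde B=0$, then $\tilde S$ already has ratio $0$ and is optimal, so we return it.

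Otherwise~\eqref{search4} and~\eqref{search5} place $\OPT_{\text{\WSSC}}$ in $[\tilde B/(W^2/4),\,\tilde B/(w_{\min}(W-w_{\min}))]$. This interval has ratio at most $W^2/w_{\min}^2$. It is therefore covered by $L=O(\ln(W/w_{\min}))$ dyadic intervals, and $L$ is polynomial in $n$ and the bit length of $w$.

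\textbf{Running Algorithm~\ref{alg:WSSC} on each interval.} For each $j$ we run Algorithm~\ref{alg:WSSC} with $B_j$ equal to the upper end of the $j$-th interval, multiplied by a factor $(1+\eta)$. This factor makes the strict inequality $f(U^\star)/D^\star<B$ assumed in Theorem~\ref{thm:wssc_main} hold for the index $j^\star$ whose interval contains the optimum. We set the per-call success probability to $1-(1-p)/L$. Each run uses $\frac{8k^3}{c}\ln(L/(1-p))$ iterations. Since $1-p\ge\Omega(\exp(-\mathrm{poly}(n)))$, this count is polynomial.

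Among all returned sets, including $\tilde S$, we output the one of minimum ratio. The output is never worse than the output of run $j^\star$. By Theorem~\ref{thm:wssc_main}, that run returns a set of ratio at most
\[
4\sqrt{k/\ln k}\,B_{j^\star}\le 4(1+\eta)\cdot 2\sqrt{k/\ln k}\,\OPT_{\text{\WSSC}}.
\]
A union bound over the $L$ runs gives overall success probability at least $p$.

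\textbf{Main obstacle.} The delicate step is the boundary and strictness bookkeeping. Theorem~\ref{thm:wssc_main} needs a solution of value \emph{strictly} below $B$. This is exactly what the inflation factor $(1+\eta)$ handles: equivalently, we can use the next larger dyadic value, which costs only another factor of $2$.

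We also need every run to produce a set with $w(T)\,w(\bar T)>0$, so that the ratio is defined. Lemma~\ref{lem:wssc_val_bound} guarantees this whenever a run returns a set. A failed run returns \textsf{fail}, and we simply ignore it when taking the minimum.
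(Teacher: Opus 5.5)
Your proposal is correct and follows the paper's argument. You bracket $\OPT_{\text{\WSSC}}$ using $\tilde B$ from~\eqref{search1}, run Algorithm~\ref{alg:WSSC} once per dyadic sub-interval with failure probability $(1-p)/L$, and apply Theorem~\ref{thm:wssc_main} to the interval containing the optimum. Your extra bookkeeping fills small details the paper leaves implicit: the $(1+\eta)$ inflation so the strict inequality $f(U^\star)/D^\star<B$ holds even when the optimum sits at an interval endpoint, the degenerate cases $\tilde B=0$ and $k\le 1$, and the explicit rule of outputting the minimum-ratio returned set.
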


\paragraph{Proving Theorem~\ref{thm:balanced_cut_algorithm}.}
We now analyze the \SBC algorithm given by Algorithm~\ref{alg:SBC_general}, provided the \WSSC guarantee of Corollary~\ref{cor:wssc_approx}.
At a high level, the algorithm repeatedly applies the WSSC subroutine and peels off one side of the resulting cut by setting its weights to zero.
In each iteration, if the lighter side is $S_i$ it is added to $T_1$, and otherwise
$\bar S_i$ is added to $T_2$; this controlled removal guarantees that the final output
is a balanced cut, while allowing $f(T_1)$ and $f(\bar T_2)$ to be bounded using the guarantee of Corollary~\ref{cor:wssc_approx}. 

\begin{algorithm}[t]
\caption{\SBClong}
\label{alg:SBC_general}
\begin{algorithmic}[1]
\Require $V,f,w$, and parameter $b'$
\Ensure a set $T \subseteq V$

\State $w' \gets w$, \quad $i \gets 0$, \quad $T_1 \gets \varnothing$, \quad $T_2 \gets \varnothing$
\While{$w'(V) > (1-b')\,w(V)$}
    \State invoke the algorithm from Corollary~\ref{cor:wssc_approx} on $(V,f,w')$ to obtain $S_i\subseteq V$
    \If{$w'(S_i) \le w'(\bar S_i)$}
        \State $T_1 \gets T_1 \cup S_i$
        \State set $w'(v)\gets 0$ for all $v\in S_i$
    \Else
        \State $T_2 \gets T_2 \cup (\bar S_i)$
        \State set $w'(v)\gets 0$ for all $v\in (\bar S_i)$
    \EndIf
    \State $i \gets i+1$
\EndWhile
\State \textbf{if} $w(T_1) \ge w(T_2)$ \textbf{then return} $T_1$ \textbf{else return} $\bar T_2$
\end{algorithmic}
\end{algorithm}

\begin{proof}[Proof of Theorem~\ref{thm:balanced_cut_algorithm}]
In each iteration $i$, the \textsc{WSSC} subroutine returns a cut $S_i$ with
$w'(S_i),w'(\bar S_i)>0$ (otherwise its \textsc{WSSC} value is infinite); hence in
each iteration we set $w'(v)=0$ for at least one previously positive-weight element,
and the loop terminates after at most $n$ iterations.

When the while-loop terminates we have $w'(V)\le (1-b')\,w(V)$, and since $w'$
is obtained from $w$ by setting weights to zero on the elements added to
$T_1\cup T_2$, it follows that $w(T_1\cup T_2)\ge b'\,w(V)$.
Therefore $\max\{w(T_1),w(T_2)\}\ge (b'/2)\,w(V)$.
Now consider the last iteration of the loop. At its beginning,
$w'(V)>(1-b')\,w(V)$, and in this iteration we set to zero the lighter of
$S_i,\bar S_i$, hence the total remaining weight decreases by at most
$\min\{w'(S_i),w'(\bar S_i)\}\le \sfrac{1}{2}\, w'(V)$. Consequently, at the end of
this iteration we still have $w'(V)>\sfrac{1}{2}\,(1-b')\,w(V)$.
Now fix $T\in\{T_1,T_2\}$. Since $w'(T)=0$, we have $w'(\bar T)=w'(V)$ and thus
$w(\bar T)\ge w'(\bar T)=w'(V)>\sfrac{1}{2}\,w(V)\ge (b'/2)\,w(V)$,
where the last inequality uses $b'\le \sfrac{1}{2}$.
We have shown that for $T\in\arg\max\{w(T_1),w(T_2)\}$ it holds that
$w(T),w(\bar T)\ge(b'/2)\,w(V)$, and thus the algorithm returns a
$b'/2$-balanced cut.

Let $Q^\star$ be an optimal $b$-balanced cut with respect to the original weights $w$, i.e.,
  \[
f(Q^\star)
=
\min\bigl\{ f(A)\ \big|\ A\subseteq V,\; b\,w(V)\le w(A)\le (1-b)\,w(V) \bigr\}.
\]

Fix an iteration $i$. Since the loop condition holds, $w'(Q^\star)+w'(\bar Q^\star)>(1-b')\,w(V)$.
Moreover, since $Q^\star$ is $b$-balanced with respect to $w$, we have
$\max\{w'(Q^\star),w'(\bar Q^\star)\}\le (1-b)\,w(V)$. It follows that
$w'(Q^\star)\cdot w'(\bar Q^\star)\ge (b-b')\,w(V)\cdot(1-b)\,w(V)$.
By Corollary~\ref{cor:wssc_approx}, the set $S_i$ is an approximate solution to the WSSC instance with respect to the current weights $w'$ in iteration $i$.
Since the approximation factor is at most $\gamma = O\!\left(\sqrt{k/\ln k}\right)$ and $Q^\star$ is a feasible solution to this instance, we have:
\[
\frac{f(S_i)}{w'(S_i)\,w'(\bar S_i)}
\;\le\;
\gamma \cdot
\frac{f(Q^\star)}{w'(Q^\star)\,w'(\bar Q^\star)}
\;\le\;
\frac{\gamma\,f(Q^\star)}{(b-b')\,w(V)\cdot(1-b)\,w(V)}.
\]
Let $P_i\in\{S_i,\bar S_i\}$ denote the side whose weights are set to zero in
iteration~$i$ (the lighter side). Since $w'(P_i)=w\!\left(P_i\setminus\bigcup_{j=0}^{i-1} P_j\right)$,
$w'(S_i)\,w'(\bar S_i)\le w'(P_i)\,w(V)$, and $(1-b)\ge 1/2$, it follows that:

\[
f(S_i)
\;\le\;
w\!\left(P_i\setminus\bigcup_{j=0}^{i-1} P_j\right)\cdot
\frac{2\gamma\, f(Q^\star)}{(b-b')\,w(V)}.
\]
To complete the proof, we bound the values of $f(T_1)$ and $f(\bar T_2)$.
By submodularity and nonnegativity,
\[
f(T_1)
\;\le\;
\sum_{i:\,P_i=S_i} f(S_i)
\;\le\;
w(T_1)\cdot\frac{2\gamma\, f(Q^\star)}{(b-b')\,w(V)}
\;\le\;
O\!\left(\frac{\gamma}{b-b'}\right)\cdot f(Q^\star).
\]
An analogous argument applied to $\bar f(S)\triangleq f(\bar S)$ (which is
known to be submodular) yields:
$f(\bar T_2)=\bar f(T_2)\le \sum_{i:\,P_i=\bar S_i} f(S_i)$,
and the same bound follows.
\end{proof}

\paragraph{Probability Amplification.}
The probability amplification for \SBC follows exactly the same argument used
above for \WSSC, since the \SBC algorithm makes $O(n)$ calls to the \WSSC
subroutine, and can therefore be made to succeed with any desired probability $p$ satisfying $1-p\geq \Omega(\exp(-\text{poly}(n)))$.
Similarly, the \MLOP algorithm makes $O(n)$ calls to \SBC, and the same
amplification applies.
Consequently, the \MLOP algorithm runs in polynomial time and can be made to
succeed with an arbitrarily high probability $p$ satisfying $1-p\geq \Omega(\exp(-\text{poly}(n)))$.

\end{document}